\begin{document}

\setlength\marginparwidth{16mm}
\setlength\marginparsep{1mm}
\setlength\marginparpush{2mm}
\newcommand{\commentmarker}[1]{\colorbox{red}{\textcolor{white}{#1}}}
\newcounter{CommentCounter}
\newcommand{\commentUp}[3][]{%
  \stepcounter{CommentCounter}%
  {\scriptsize\commentmarker{\theCommentCounter}}%
  \marginpar{%
    \vspace{-#2}
    \tiny\raggedright
    \commentmarker{\theCommentCounter}
    \ifstrempty{#1}{}{\textcolor{red}{#1: }}%
    {\scriptsize #3}%
  }%
}
\newcommand{\internalcomment}[2][]{\commentUp[#1]{0pt}{#2}}

\newcommand{\done}[1]{\ifdefined\hidetodo \else{\color{lime}{ done: #1} }\fi}
\newcommand{\todo}[1]{\ifdefined\hidetodo \else{\color{red}{ todo: #1} }\fi}
\newcommand{\TODO}[1]{\ifdefined\hidetodo \else{\color{red}{\textbf{ TODO: #1} }}\fi}
\newcommand{\note}[1]{\ifdefined\hidetodo \else{\color{teal}{ (Note: #1)} }\fi}
\newcommand{\depricate}[1]{\ifdefined\showdepricate \greyout{ #1 } \else \fi}
\newcommand{\greyout}[1]{{\color{gray}{ {\footnotesize #1 }} }}
\newcommand{\needcite}{\ifdefined\hidetodo \else{\color{red}{[?]}}\fi}
\newcommand{\needref}{\ifdefined\hidetodo \else{\color{red}{ref?}}\fi}
\newcommand{\authsec}{\note{auth section}}
\newcommand{\EDCcite}{De Cristofaro et al. \cite{CristofaroGT12}\xspace}
\newcommand{\name}{\textsc{Datashare}\xspace}
\newcommand{\nameN}{\textsc{DatashareNetwork}\xspace}
\newcommand{\DataShare}{\name}
\newcommand{\ICIJ}{ICIJ\xspace}
\newcommand{\ICIJFull}{International Consortium of Investigative Journalists\xspace}
\newcommand{\MIS}{CS\xspace}
\newcommand{\DsIR}{Datashare Multilingual Information Extraction Platform\xspace}
\newcommand{\paranodot}[1]{\vspace{1mm}\noindent\textbf{#1}}
\newcommand{\para}[1]{\vspace{1mm}\noindent\textbf{#1}.}

\newcommand{\confversioncmd}[1]{}
\newcommand{\fullversioncmd}[1]{}
\begin{confversion}
  \renewcommand{\confversioncmd}[1]{#1}
\end{confversion}
\begin{fullversion}
  \renewcommand{\fullversioncmd}[1]{#1}
\end{fullversion}

\renewcommand{\psi}{PSI\xspace}
\newcommand{\psica}{PSI-CA\xspace}
\newcommand{\cspsi}{C-PSI\xspace}
\newcommand{\cspsica}{C-PSI-CA\xspace}
\newcommand{\mspsi}{MS-PSI\xspace}
\newcommand{\mspsica}{MS-PSI\xspace}
\newcommand{\mspsicalim}{MS-PSI-lim\xspace}
\newcommand{\numdocsearch}{\#doc\xspace}

\newcommand{\yes}{\checkmark}
\newcommand{\no}{$\times$}
\newcommand{\soso}{$\sim$}
\newcommand{\order}{\mathcal{O}}
\newcommand{\ith}{$i^{\text{\tiny{th}}}$}

\newcommand{\gen}{g}
\newcommand{\G}{\mathbb{G}} 
\newcommand{\grouporder}{p}
\newcommand{\Zp}{\mathbb{Z}_{\grouporder}}
\newcommand{\Zn}{\mathbb{Z}_{N}}
\newcommand{\randin}{\gets^{\$}}
\newcommand{\permutation}{\Pi}
\newcommand{\binanswer}{0/1}
 
\renewcommand{\hash}{H}
\newcommand{\GHash}{\hat{H}}
\newcommand{\cuckoo}{\textvar{CF}}
\renewcommand{\secpar}{\ell}
 
\newcommand{\code}[1]{\text{\textbf{#1}}}
\newcommand{\codify}[1]{ \text{\textsf{#1}} }

\newcommand{\NatNumUpTo}[1]{[#1]}
\newcommand{\uniformdist}[1]{[0, #1]}

\newcommand{\kand}{ \code{ \& } }
\newcommand{\kforeach}[2]{{\code{For } #1 \code{ in } #2 :}}
\newcommand{\foriter}[2]{#1=1,..,{#2}}
\newcommand{\setdelim}{\;|\;}
\renewcommand{\concat}{\;||\;}
\newcommand{\makeset}[2]{\set{#1{1}, #1{2}, .., #1{#2}} }
\newcommand{\makesetsub}[2]{\set{#1_{1}, .., #1_{#2}} }
\newcommand{\makelist}[2]{[#1{1}, #1{2}, .., #1{#2}]}

\newcommand{\expCost}[1]{\ensuremath{#1 \tau_e}}
\newcommand{\hashCost}[1]{\ensuremath{#1 \tau_H}}
\newcommand{\exphashCost}[1]{\ensuremath{#1\tau_{H+e}} }
\newcommand{\millisec}{ms}
\newcommand{\second}{s}
\newcommand{\minute}{Min}

\newcommand{\serverkey}{s}
\newcommand{\clientkey}{c}

\newcommand{\clientsize}{m}
\newcommand{\serversize}{N}
\newcommand{\setsize}[1]{{n_{#1}}}
\newcommand{\totalsize}{S}
\newcommand{\kwlim}{\textvar{lim}}
\newcommand{\uniqueNum}{u}

\newcommand{\y}{y}
\newcommand{\x}{x}
\newcommand{\xp}{\tilde{x}}
\newcommand{\xpp}{\hat{x}}
\newcommand{\collection}{ \mathcal{Y}  }
\newcommand{\yset}{Y}
\newcommand{\yseti}[1]{\yset_{#1}}
\newcommand{\pretagset}[1]{\tau^{(#1)}}
\newcommand{\compPretag}{\pretag=\x^{\serverkey}}
\newcommand{\compFinaltag}{\finaltag_i=\hash(i \concat \pretag)}
\newcommand{\cardinality}{ \textrm{Out}  }
\newcommand{\tagCollection}{\textvar{TC}}
\newcommand{\clientTags}{\textvar{T}}
\newcommand{\clientPreTag}{\tau}

\newcommand{\X}{X}
\newcommand{\XP}{\tilde{X}}
\newcommand{\XPP}{\hat{X}}
\newcommand{\TAG}{\textrm{Tags}}

\newcommand{\XDef}{\X       = \makesetsub{\x   }{\clientsize}}
\newcommand{\XYintersection}{\X \cap \yset}
\newcommand{\YSETDef}{ \yset_i = \set{\y_{i,1}, .., \y_{i,\setsize{i}} } }
\newcommand{\CollectionDef}{ \collection = \makesetsub{\yset}{\serversize} }

\newcommand{\n}{n}
\newcommand{\kw}[1]{a_{#1}}
\newcommand{\kwi}{\kw{i}}
\newcommand{\kwx}{\kw{x}}
\newcommand{\kwlist}{\{\kw{1}, .., \kw{\n}\}}
\newcommand{\kwset}{S}
\newcommand{\partialset}{P}

\newcommand{\gssetup}{\textvar{AC.setup}}
\newcommand{\gsjoin}{\textvar{AC.depricatedddddd}}
\newcommand{\gsissue}{\textvar{AC.obtain}}
\newcommand{\gsverify}{\textvar{AC.verify}}
\newcommand{\gsreveal}{\textvar{AC.reveal}}
\newcommand{\gsmsk}{\textvar{msk}}
\newcommand{\gsmpk}{\textvar{mpk}}
\newcommand{\gsusersk}{\textvar{sk}}
\newcommand{\gsattr}{\textvar{attrs}}
\newcommand{\gsratelim}{\textvar{rate\_lim}}

\newcommand{\gssign}{\textvar{Sign}}
\newcommand{\gsverifysig}{\textvar{Verify}}

\newcommand{\gsacc}{\textvar{Acc}}
\newcommand{\gsrej}{\textvar{Rej}}
\newcommand{\signature}{\sigma}
\newcommand{\cred}{C}

\newcommand{\cfinputset}{S}
\newcommand{\cfelem}{x}
\newcommand{\cfparams}{\textvar{params}}
\newcommand{\cfcompress}{\textvar{CF.compress}}
\newcommand{\cfcheck}{\textvar{CF.membership}}
\newcommand{\cfintersection}{\textvar{CF.intersection}}

\newcommand{\sampleclientsize}{10}
\newcommand{\sampleserversize}{100}
\newcommand{\samplesetsize}{1000}
\newcommand{\sampleuniqueratio}{10}
\newcommand{\sampleuniquesize}{10,000}
\newcommand{\sampletotalsize}{100,000}

\newcommand{\field}[2]{{#2}_{#1}}
\newcommand{\object}[1]{}
\newcommand{\jRecord}{\textvar{Rec}}
\newcommand{\jour}{J}
\newcommand{\owner}{O}
\newcommand{\doc}{d}
\newcommand{\journalists}{\mathcal{J}}
\newcommand{\jourField}[1]{\field{\jour}{{#1}} }
\newcommand{\ownerField}[1]{\field{\owner}{{#1}} }
\newcommand{\ownerNymField}[1]{\field{\ownerField{\pid}}{{#1}} }
\newcommand{\querierField}[1]{\field{q}{{#1}} }
\newcommand{\queryField}[1]{\field{\query}{{#1}} }
\newcommand{\querySignature}{\signature_q}
\newcommand{\contactSignature}{\field{C}{\signature}}
\newcommand{\query}{Q}
\newcommand{\querier}{\mathcal{Q}}
\newcommand{\response}{R}
\newcommand{\pid}{\textvar{nym}}
\newcommand{\publicMailbox}{\textvar{BB}}
\newcommand{\DSPH}{\textvar{MS}\xspace}
\newcommand{\DSpigeonhole}{communication server\xspace}
\newcommand{\docMatch}{t}
\newcommand{\emailmb}{\textvar{email}}

\newcommand{\jourPK}{\textsf{pk}}
\newcommand{\jourSK}{\textsf{sk}}
\newcommand{\pkEnc}{\textvar{Enc}}
\newcommand{\pkDec}{\textvar{Dec}}

\newcommand{\symKey}{\textrm{key}}
\newcommand{\symDec}{\textrm{Dec}}
\newcommand{\symEnc}{\textrm{Enc}}
\newcommand{\anonCred}{\sigma}
\newcommand{\thread}{\textrm{Th}}
\newcommand{\createThread}{\textrm{createThread}}
\newcommand{\jourGS}{\textrm{JGS}}

\theoremstyle{definition}
\newtheorem{definition}{Definition}
\newtheorem{protocol}{Protocol}
\newtheorem{lemma}{Lemma}
\newtheorem{theorem}{Theorem}
\newcommand{\parab}[1]{\vspace{1mm}\noindent\textbf{#1}}          
\newcommand{\parait}[1]{\noindent\textit{#1}}                     
\newcommand{\paraNormal}[1]{\vspace{1mm}\noindent\normalfont{#1}} 
\newcommand{\textvar}[1]{\textrm{\textsf{#1}}} 
\newcommand{\tikzlongarrow}[2]{
\begin{tikzpicture}[]
  \draw[#1](0,0) -- node[above=-0.5ex]{\ensuremath{#2}} (1.5,0);
  \node[draw=none] (bottom) at (0,-0.5ex) {};
  \node[draw=none] (top) at (0,1ex) {};
  \node[draw=none] (lowerleft) at (bottom-|current bounding box.west) {};
  \node[draw=none] (topright) at (top-|current bounding box.east) {};
  \pgfresetboundingbox
  \draw[draw=none,use as bounding box] (lowerleft) rectangle (topright);
\end{tikzpicture}
}
\newcommand{\diagramsend}[1]{\tikzlongarrow{->}{#1}}
\newcommand{\diagramrecv}[1]{\tikzlongarrow{<-}{#1}}

\title{\Large \bf \nameN\\ A Decentralized Privacy-Preserving Search Engine for Investigative Journalists}

\date{}

\author{
    {Kasra EdalatNejad} \\ SPRING Lab, EPFL
    \and 
    {Wouter Lueks} \\ SPRING Lab, EPFL
    \and 
    {Julien Pierre Martin} \\ Independent
    \and 
    {Soline Led\'esert} \\ ICIJ
    \and 
    {Anne L'H\^ote} \\ ICIJ
    \and 
    {Bruno Thomas} \\ ICIJ
    \and 
    {Laurent Girod} \\ SPRING Lab, EPFL
    \and 
    {Carmela Troncoso} \\ SPRING Lab, EPFL
    \and 
} 

\maketitle
\begin{abstract}
  Investigative journalists collect large numbers of digital documents during
  their investigations. These documents can greatly benefit other journalists'
  work. However, many of these documents contain sensitive information. Hence,
  possessing such documents can endanger reporters, their stories, and their
  sources. Consequently, many documents are used only for single, local,
  investigations.
  We present \nameN, a decentralized and privacy-preserving search system
  that enables journalists worldwide to find documents via a dedicated network
  of peers. \nameN combines well-known anonymous authentication mechanisms and
  anonymous communication primitives, a novel asynchronous messaging system, and
  a novel multi-set private set intersection protocol (MS-PSI) into a
  \emph{decentralized peer-to-peer private document search engine}. 
  We prove that \nameN is secure; and show using a prototype implementation
  that it scales to thousands of users and millions of documents.

\end{abstract}
\section{Introduction}
Investigative journalists research topics such as corruption, crime, and
corporate misbehavior. Two well-known examples of investigative projects are the
Panama Papers that resulted in several politicians' resignations and sovereign
states recovering hundreds of millions of dollars hidden in offshore
accounts~\cite{Panama}, and the Boston Globe investigation on child abuse that
resulted in a global crisis for the Catholic Church~\cite{ChurchAbuses}.
Investigative journalists' investigations are essential for a healthy
democracy~\cite{Carson12}. They provide the public with information kept secret
by governments and corporations. Thus, effectively holding these institutions
accountable to society at large.

In order to obtain significant, fact-checked, and impactful results, journalists
require large amounts of documents. In a globalized world, local issues are
increasingly connected to global phenomena. Hence, journalists' collections can
be relevant for other colleagues working on related investigations. However,
documents often contain sensitive and/or confidential information and possessing
them puts journalists and their sources increasingly at risk of identification,
prosecution, and persecution~\cite{McGregorRC16,McGregorCHR15}. As a result
journalists go to great lengths to protect both their documents and their
interactions with other journalists~\cite{McGregorWACR17}. With these risks in
mind, the \ICIJFull (\ICIJ) approached us with this question: \emph{Can a global
community of journalists search each other's documents while minimizing the risk
for them and their sources?} 

Building a practical system that addresses this question entails solving five key challenges:

\noindent1) \emph{Avoid centralizing information}. A party with access to all
the documents and journalists' interaction would become a very tempting target
for attacks by hackers or national agencies, and for legal cases and subpoenas
by governments. 

\noindent2) \emph{Avoid reliance on powerful infrastructure}. 
Although \ICIJ has journalists worldwide, it does not have highly available
servers in different jurisdictions.

\noindent3) \emph{Deal with asynchrony and heterogeneity}. Journalists are
spread around the world. There is no guarantee that they are online at the same
time, or that they have the same resources.

\noindent4) \emph{Practical on commodity hardware}. Journalists must be able to
search documents and communicate with other journalists without this affecting
their day-to-day work. The system must be efficient both computationally and
in communication costs.
 
\noindent5) \emph{Enable data sovereignty}. Journalists are willing to share but
not unconditionally. They should be able to make informed decisions on revealing
documents, on a case-by-case basis.

The first four requirements preclude the use of existing advanced
privacy-preserving search technologies, whereas the fifth requirement precludes
the use of automatic and rule-based document retrieval.
More concretely, the first requirement prevents the use of central databases and
private information retrieval (PIR)~\cite{BeimelI01,Goldberg07,KushilevitzO97}
between journalists, as standard PIR requires a central list of all searchable
(potentially sensitive) keywords.
The second requirement rules out multi-party computation (MPC) between
distributed servers~\cite{HuangEK12,Pinkas0WW18,Pinkas0SZ15}.

The third and fourth requirement exclude technologies that require
many round trips or high bandwidth between journalists such as custom private
set intersection~\cite{CristofaroT10,FalkNO18,HuangEK12,KissL0AP17,Pinkas0Z14},
keyword-based PIR~\cite{AngelS16,ChorGN97}, and generic MPC
protocols~\cite{HuangEK12,Pinkas0WW18,Pinkas0SZ15,WangRK17}, as well as
the use of privacy-preserving communication systems that require all users to be
online~\cite{LazarGZ18,HooffLZZ15}.

We introduce \nameN, a decentralized document search engine for journalists to
be integrated within \ICIJ's open source tool for organizing information called
Datashare\cite{DatasharePlatform}. \nameN addresses the challenges as follows.
First, journalists keep their collections in their computers. Thus, if a
journalist is hacked, coerced, or corrupted, only her collection is compromised.
Second, we introduce a new multi-set private set intersection (MS-PSI) protocol
that enables asynchronous search and multiplexes queries to reduce computation
and communication costs. Third, we combine existing privacy-preserving
technologies~\cite{DingledineMS04,AngelCLS18} to build a pigeonhole-like
communication mechanism that enables journalists to anonymously converse with
each other in an unobservable manner. These components ensure that even if an
adversary gains the ability to search others' documents, she cannot extract all
documents nor all users in the system. In the rest of the document, for
simplicity, we refer to \nameN as \name.

Our contributions are as follows:

\paranodot{\checkmark} We elicit the security and privacy requirements of a
  document search system for investigative journalists.

\paranodot{\checkmark} We introduce MS-PSI, a private set intersection protocol
  to efficiently search in multiple databases without incurring extra leakage
  with respect to traditional PSI with pre-computation. 

\paranodot{\checkmark} We propose an asynchronous messaging system that enables
journalists to search and converse in a privacy-preserving way.

\paranodot{\checkmark} We design \name, a secure and privacy-preserving
  decentralized document search system that protects from malicious users and
  third parties the identity of its users, the content of the queries and, to a
  large extent, the journalists' collections themselves. We show that \name
  provides the privacy properties required by journalists, and that the system
  can easily scale to more than 1000 participants, even if their document
  collections have more than 1000 documents.

\section{Towards Building \name} 
\label{sec:system}

We build \name at the request of the \ICIJFull, \ICIJ. When unambiguous from the
context, we refer to \ICIJ simply as the organization.

\subsection{Requirements Gathering}
\label{sec:system:requirements}
In order to understand the needs of investigative journalists, \ICIJ ran a
survey among 70 of their members and provided us with aggregate statistics,
reported below. We used the survey results as starting point for the system's
requirements, and we refined these requirements in weekly meetings held for more
than one year with the members of \ICIJ's Data \& Research Unit who are in
charge of the development and deployment of the local tool
Datashare\cite{DatasharePlatform}.
 
\para{User Base} \ICIJ consists of roughly 250 permanent journalist members in
84 countries. These members occasionally collaborate with external reporting
partners. The maximum number of reporters working simultaneously on an
investigation has reached 400. The organization estimates that each member is
willing to make approximately one thousand of their documents available for
searching. To accommodate growth, we consider that \name needs to \emph{scale to
(at least) 1000 users, and (at least) 1 million documents}.

Journalists work and live all over the globe, ranging from Sydney to San
Francisco, including Nairobi and Kathmandu; this results in large timezone
differences. Around 38\% of the journalists have a computer permanently
connected to the Internet, and another 53\% of them are connected during work
hours: eight hours a day, five days a week. The rest are connected only during a
few hours per day. As it is unlikely that journalists are online at the same
time, \emph{the search system needs to enable asynchronous requests and
responses}. Furthermore, many journalists live in regions with low-quality
networks: only half of the journalists report having a fast connection. Thus,
\emph{\name cannot require high bandwidth}.

\para{Waiting Time} As the system must be asynchronous, the survey asked
journalists how much they are willing to wait to obtain a the result of a query.
21\% of the surveyees are willing to wait for hours, whereas another 56\% can
wait for one or more days. Hence, \emph{\name does not need to enable real-time
search}. Yet, given the delivery times of up to 24 hours, to keep search latency
within a few days, \name must \emph{use protocols that can operate with just one
communication round}. Therefore, we discard multi-round techniques such as
multi-party computation~\cite{HuangEK12,Pinkas0WW18,Pinkas0SZ15,WangRK17}.

\para{Queries Nature} The queries made by journalists are in a vast majority
formed by \emph{keywords} called \emph{named entities}: names of organizations,
people, or locations of interest. Therefore, journalists do not require a very
expressive querying language: \name must \emph{support queries made of
conjunctions of keywords}. Journalists are interested in a small set of these
entities at a time: only those related to their current project. Consequently,
\emph{queries are not expected to include more than 10 terms at a time}, and
\emph{journalists are not expected to issue a large number of queries in
parallel}. 

During the design phase, we also learned that as most terms of interest are
investigation-specific (e.g., XKeyScore in the Snowden leaks, or Mossack Fonseca
in the Panama Papers), \emph{a pre-defined list of terms cannot cover all
potentially relevant keywords for journalists}. Therefore, techniques based on
fixed lists such as private information retrieval
(PIR)~\cite{BeimelI01,Goldberg07,KushilevitzO97} are not suitable for building
\name.

\para{Security and Privacy} Regarding security and privacy concerns, journalists
identify four types of principals: the journalists themselves, their sources,
the people mentioned in the documents, and the ICIJ. They identify three assets:
the named entities in documents, the documents themselves, and the conversations
they have during an investigation. The disclosure of named entities could leak
information about the investigation, or could harm the cited entities (which
could in turn could trigger a lawsuit). Whole documents are considered the most
sensitive as they provide context for the named entities. Finally, the
disclosure of the content or existence of conversations could endanger the
journalists involved, their sources, the organization, and the whole
investigation.

Journalists mostly worry about third party adversaries such as corporations,
governments (intelligence agencies), and organized crime. Sources and other
journalists are in general considered non-adversarial. Similarly, journalists
\emph{trust ICIJ to be an authority for membership and to run their
infrastructure}. However, to prevent coercion and external pressures, ICIJ does
\emph{not} want to be trusted for privacy.

The main requirement for \name is \emph{to protect the confidentiality of assets
from third parties that are not in the system}. This implies that \name cannot
require journalists to send their data to third parties for analysis, storage,
indexing, or search. Journalists are concerned about only subsets of these
adversaries at a time. Therefore, \name \emph{does not need to defend against
global adversaries.}

Journalists initially did not consider their colleagues as adversaries. However,
after a threat analysis, we concluded that there is a non-negligible risk that
powerful adversaries can bribe or compromise honest journalists, in particular
when those journalists live in jurisdictions with less protection for civil
rights. Therefore, we require that \name \emph{must minimize the amount of
information that journalists, or ICIJ, learn about others: searched keywords,
collections, and conversations}. More concretely, we require that \emph{searches
be anonymous and that the searched terms be kept confidential, with respect to
both journalists and the organization.} This way neither journalists nor the
organization become a profitable target for adversaries.

With respect to conversations, 64\% of the surveyees report that they would
prefer to remain anonymous in some cases. Furthermore, 60\% of the respondents
declare that they prefer to have a screening conversation before deciding to
share documents. This means that \emph{search and sharing features need to be
separated to enable screening}. \name must \emph{provide anonymous means for
journalists to discuss document sharing to ensure safety}. We expect
conversations within \name to be short, as their only goal is to agree on
whether to proceed with sharing. After journalists agree, we assume they will
switch to an alternative secure communication channel and \name does not need to
support document retrieval.

\subsection{Sketching \name}
\name is run by \ICIJ. Access to the system is exclusive to \ICIJ members and
authorized collaborators. Journalists trust \ICIJ to act as a token issuer and
only give tokens to authorized journalists. To enable journalists to remain
anonymous, tokens are implemented using blind signatures. Journalists use these
tokens demonstrate membership without revealing their identities.

\name provides the following infrastructure to facilitate asynchronous
communication between journalists: a \emph{bulletin board} that journalists use
to broadcast information, and a \emph{pigeonhole} for one-to-one communication.
All communications between journalists and the infrastructure (pigeonhole or
bulletin board) are end-to-end encrypted (i.e., from journalist to journalist)
and anonymous. Hence, the infrastructure needs to be trusted for availability,
but not to protect the privacy of the journalists and their documents.

Each authorized journalist in \name owns a corpus of documents that they make
available for search. Journalists can take two roles: (i) \emph{querier}, to
search for documents of interest, and (ii) \emph{document owner}, to have their
corpus searched. Journalists first search for matching documents then
(anonymously) converse with the corresponding document owners to request the
document.

\begin{figure}[t]
    \includegraphics[width=\linewidth,trim=3mm 33mm 2mm 0mm,clip]{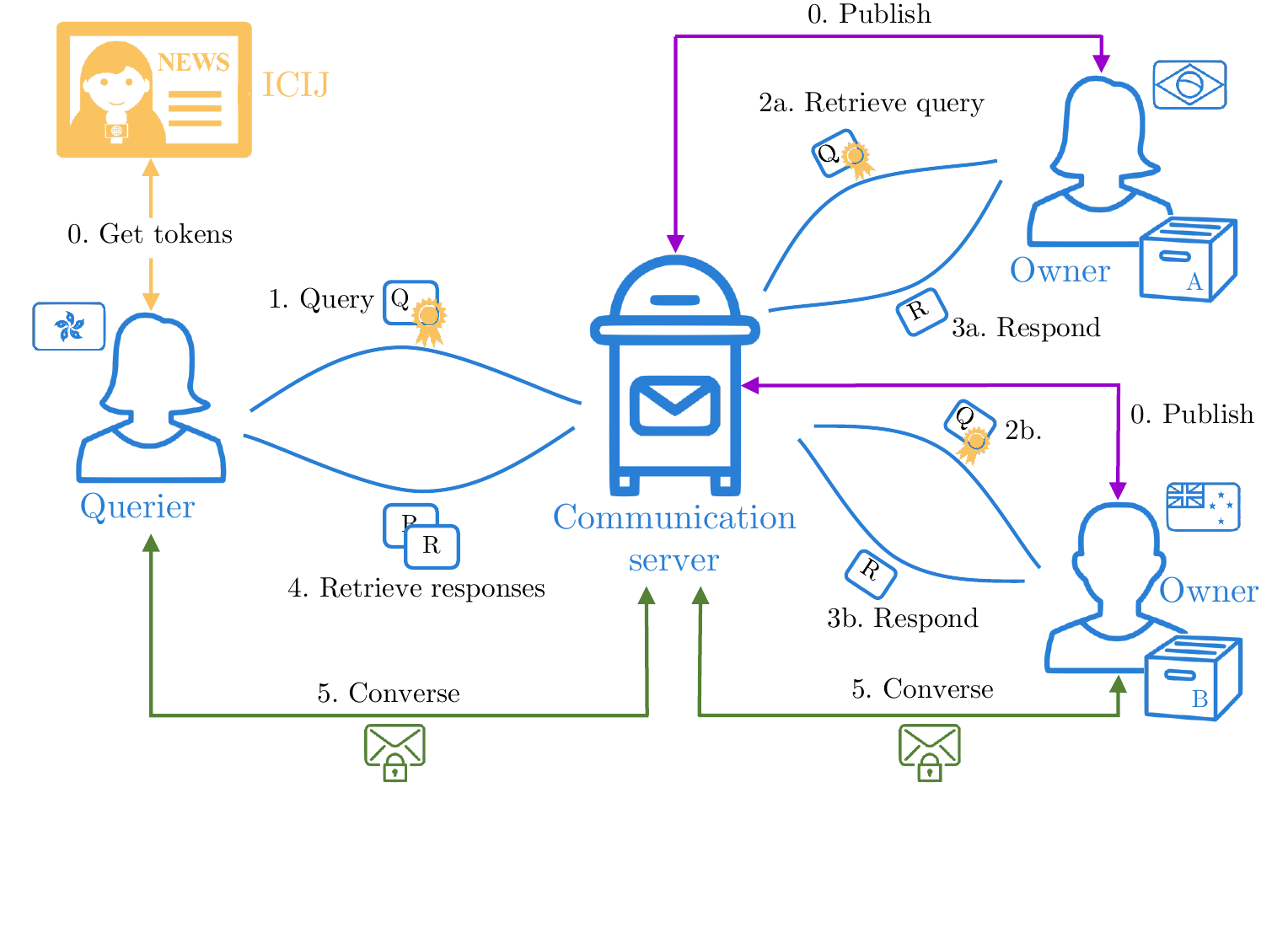}
    \caption{\name architecture overview.}
    \label{fig:arch}
\end{figure}

Figure~\ref{fig:arch} sketches \name's architecture. First, journalists upload
privacy-preserving representations of their collections and contact information
to the bulletin board. To issue a query, journalists construct a
privacy-preserving representation of their keywords and broadcast it together
with an authorization token through the bulletin board. Owners periodically
retrieve new queries from the bulletin board. If the authorization is valid,
they send a response to the querier using the pigeonhole. The querier uses this
response to identify matches with the documents in the owner's collection.

When journalists find a \emph{match} in a collection, i.e., a document that
contains all the keywords in the query, they can start a conversation with the
document owner to request sharing. Document owners append a public contact key
to their collection to enable queriers to carry out this conversation in an
anonymous way via the pigeonhole.

\para{Instantiation} \name uses four main privacy-preserving building blocks: a
multi-collection search mechanism, a messaging system, an anonymous
communication channel, and an authorization mechanism.

We implement the privacy-preserving search mechanism by using a novel primitive
that we call multi-set private set intersection (MS-PSI) described in
Section~\ref{sec:multi_set_psi}. We design a privacy-preserving messaging system
in Section~\ref{sec:messaging}; it provides both the bulletin board and
pigeonhole functionality. We rely on the Tor~\cite{DingledineMS04} network as
anonymous communication channel, and we use blind signatures to implement
privacy-preserving authorization (see Section~\ref{sub:preliminaries}). In
Section~\ref{sub:designing_datashare}, we explain how \name combines these
building blocks.

\section{Multi-set PSI} 
\label{sec:multi_set_psi}
Private set intersection (\psi) protocols enable two parties holding sets $\X$
and $\yset$ to compute the intersection $\XYintersection$, without revealing
information about the individual elements in the sets. In this section, we
introduce a multi-set private set intersection (\mspsi) protocol that
simultaneously computes intersections of set $\X$  with $\serversize$ sets
$\{\yseti{1},\ldots, \yseti{\serversize}\}$ at the server. In
Section~\ref{sec:related_work}, we review existing \psi variants.

\para{Notation} (See Table~\ref{table:notation}) We use a cyclic group $\G$ of
prime order $\grouporder$ generated by $\gen$. We write $x \randin X$ to denote
that $x$ is drawn uniformly at random from the set $X$. Let $\secpar$ be a
security parameter. We define two hash functions $\hash: \bin^* \rightarrow
\bin^{\secpar}$ and $\GHash: \bin^* \rightarrow \G$. Finally, we write
$\NatNumUpTo{n}$ to denote the set $\{1, \ldots, n\}$.

\begin{table}[tb]
\centering
\caption{Notation.}
\label{table:notation}
\begin{tabular*}{\columnwidth}{@{}l@{\hskip4pt}l@{}}
	\toprule
	$\G, \gen, \grouporder$ & A cyclic group, its generator and the group's order\\ 
	$\secpar$ & The security parameter \\ 
	$x \randin X$ & Draw $x$ uniformly at random from the set $X$ \\ 
	$\hash, \GHash$ & Hash functions mapping into $\bin^{\secpar}$ resp.
                    group $\G$. \\
	$\NatNumUpTo{n}$ & The set $\{1, \ldots, n\}$ \\
	$\serverkey, \clientkey$ & The server's and client's secret keys\\	
	$\yset_i$ & The server's $i$th set $\YSETDef$\\	
	$\serversize, \setsize{i}$ & Nr. of server sets, resp. nr. of elements in set $\yset_i$\\	
	$\X$ & The client's set $\XDef$\\	
	$\clientsize$ & The number of elements in the client's set\\	
	$\clientPreTag, \pretagset{i}$ & Pretags for client
                                    ($\clientPreTag$) resp. the server's $i$th
                                    set $\yset_i$ ($\pretagset{i}$) \\	
	$\tagCollection$ & The server's tag collection \\	
	\bottomrule
\end{tabular*}
\end{table}

\para{Related \psi Schemes}
\label{sub:naive_psi}
We build on the single-set \psi protocol by De Cristofaro et
al.~\cite{CristofaroGT12}, see Figure~\ref{fig:psi-proto-emiliano}. In this
protocol the client blinds her elements $\x_i \in \G$ as
$\xp_i=\x_i^{\clientkey}$ using a blinding factor $\clientkey$ before sending
them to the server. The server applies its own secret to the blinded elements,
$\xpp_i=\xp_i^{\serverkey}$, and sends them back to the client in the same
order, together with a tag collection of her own blinded elements:
$\tagCollection = \set{\hash(\y^{\serverkey}) \;|\; \y \in \yset}$. The client
unblinds her elements, obtaining a list of $\x_i^{\serverkey}$s. Then, the
client computes a tag $\hash(\x_i^{\serverkey})$ for each of them and compares
it to the server's tags $\tagCollection$ to find matching elements.

\begin{figure}[tb]
\centering
\begin{tabular}{@{}l@{\hskip1pt}c@{\hskip1pt}l@{}}
	\toprule
	\textbf{Client} & & \textbf{Server} \\
	\midrule
	$ \XDef \subset \G $ & & $ \yset = \set{\y_{1}, .., \y_{\setsize{}} } \subset \G $ \\
	\midrule
	$ \clientkey \randin \Zp$ & & $\serverkey \randin \Zp$ \\
	$ \xp_i = \x_i^{\clientkey} $ & \diagramsend{\langle \xp_i \rangle } & $ \xpp_i = \xp_i^{\serverkey}  $ \\
	$ \clientTags_i =  \hash(\xpp_i^{\clientkey^{-1}}\!)   $ & \diagramrecv{\langle \xpp_i \rangle ,\tagCollection} & $\tagCollection = \set{\hash(\y^{\serverkey}) \mid \y \in \yset} $ \\ 
	Return $\{ x_i \,|\, \clientTags_i \in \tagCollection \}$ & & \\
	\bottomrule
\end{tabular}
\caption{Vanilla \psi protocol by De Cristofaro et al. \cite{CristofaroGT12}.} 
\label{fig:psi-proto-emiliano}
\end{figure}

To increase efficiency when the server set is large, client-server \psi (\cspsi)
schemes in the literature~\cite{FalkNO18,KissL0AP17,Resende2018} introduce
optimizations to avoid that the server has to compute and send a large fresh set
of tags every execution. Instead, the server \emph{precomputes} the tag
collection with a long-term secret key $\serverkey$ and sends it to the client
once. In subsequent \emph{online} phases, the server answers clients' queries by
using the long-term key $\serverkey$. This significantly improves the
communication and computation cost, as the server does not compute or send the
tag collection every time.

\para{A New Multi-set PSI Protocol} 
Our \emph{multi-set} private set intersection protocol (\mspsi) intersects a
client set $\XDef \subset \bin^* $ with $\serversize$ sets $\YSETDef \subset
\bin^*$ at the server to obtain the intersections $\X \cap \yseti{i}$. Our
protocol computes all intersections \emph{simultaneously}, lowering the
computation and communication cost with respect to running $\serversize$
parallel \psi protocols. In \name, $X$ contains the query (a conjunction of
search keywords) and $Y_i$ represents document $i$'s keywords, as described in
Section~\ref{sub:designing_datashare}. We use $\GHash$ to map keywords to group
elements.

A naive approach to building \mspsi would be to mimic the client-server
protocols and to reuse the long-term key $\serverkey$ for all sets $\yseti{i}$.
This approach maps identical elements in sets $\yseti{i}, \yseti{j}$ to the same
tag revealing intersection cardinalities $| \yseti{i} \cap \yseti{j} |$.
We remove the link between tags across sets by adding a tag diversifying step to
the precomputation phase of client-server \psi (see
Figure~\ref{table:mspsi_proto}). We first compute pretags $\pretagset{i}$ for
each set $\yset_i$ by raising each element to the power of the long-term secret
$\serverkey$. Then, we compute per-set tags by hashing the pretags
$\clientPreTag$ with the set index $i$ to obtain $\hash(i \parallel
\clientPreTag)$. The hash-function ensures that the tags of each set are
independent. The server publishes the tag collection $\tagCollection$ and the
number of sets $\serversize$.

During the online phase, the client blinds its set as in the scheme of De
Cristofaro et al. and sends it to the server. The server re-blinds the set with
its secret $\serverkey$ and sends it back to the client in the same order. The
client unblinds the result to obtain the pretags for her elements. The client
then computes the corresponding tags $\clientTags^{(d)}$, for each document $d
\in [\serversize]$, and obtains the intersection.

\fullversioncmd{In Appendix~\ref{sec:formal_proof},} 
\confversioncmd{In the extended version~\cite{datashareextended} (Appendix A),} 
we prove the following theorem to show that the server learns nothing about the
client's set, and that the client learns nothing more than the intersections $\X
\cap \yseti{i}$.

\begin{theorem}
  \label{thm:mspsi-privacy}
  The \mspsi protocol is private against malicious adversaries in the random
  oracle model for $\hash$ and $\GHash$, assuming the one-more-gap
  Diffie-Hellman assumption holds.
\end{theorem}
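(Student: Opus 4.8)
The plan is to prove Theorem~\ref{thm:mspsi-privacy} in the real/ideal simulation paradigm, analysing the two corruption cases against an ideal functionality $\mathcal{F}$ that, given $\X$ from the client and $(\yseti{1},\dots,\yseti{\serversize})$ from the server, returns the intersections $\{\X\cap\yseti{i}\}_{i\in[\serversize]}$ together with the set sizes to the client and only $|\X|$ to the server. In each case I would build a PPT simulator that runs the adversary, answers its queries to the two random oracles $\hash$ and $\GHash$, and produces a transcript computationally indistinguishable from a real execution of the protocol of Figure~\ref{table:mspsi_proto}.

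\emph{Malicious server.} Here the simulator only has to produce the single client message $\langle\xp_i\rangle$; it samples these as fresh uniform elements of $\G$ and answers all oracle queries honestly. Indistinguishability in the random oracle model is then the client-privacy argument for the vanilla protocol of De Cristofaro et al.~\cite{CristofaroGT12} applied essentially verbatim --- each real $\xp_i$ equals $\GHash(\cdot)^{\clientkey}$ for a one-time uniform blinding exponent $\clientkey$ that never leaves the client, so the server's view carries no information about $\X$ beyond its size --- since the \mspsi client's online message is unchanged.

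\emph{Malicious client.} This is the substantive direction and the one that uses the one-more-gap-Diffie--Hellman assumption. The simulator must generate the precomputed tag collection $\tagCollection$ and answer the online query $\langle\xpp_i\rangle=\langle\xp_i^{\serverkey}\rangle$ without knowing $\serverkey$ or the sets $\yseti{i}$, and it must extract the client's effective input to forward to $\mathcal{F}$. I would embed an OMGDH instance: $\serverkey$ is the challenge exponent, each online reblinding $\xp_i\mapsto\xp_i^{\serverkey}$ is answered with one call to the $(\cdot)^{\serverkey}$ oracle, and the group elements returned by $\GHash$ are planted as OMGDH target points. The tag collection is simulated as the appropriate number of uniformly random $\secpar$-bit strings. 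Whenever the client queries $\hash$ on a string $i\parallel z$, the simulator uses the DDH oracle (available in the gap setting) to test whether $z=\GHash(a)^{\serverkey}$ for some queried keyword $a$; the set of all such $a$ is the client's extracted query set $\X^{*}$, which the simulator sends to $\mathcal{F}$. For those $(a,i)$ with $a\in\X^{*}$ the simulator programs $\hash(i\parallel\GHash(a)^{\serverkey})$ to a reserved string of $\tagCollection$ exactly when $\mathcal{F}$ reports $a\in\X^{*}\cap\yseti{i}$, and to a fresh random string otherwise; all other $\hash$-queries are answered at random.

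It then remains to argue that the simulated and real client views coincide up to a negligible term. They can differ only if the client ever evaluates $\hash(i\parallel\GHash(a)^{\serverkey})$ for some $a\in\yseti{i}$ whose $\serverkey$-power it computed without a legitimate online query --- equivalently, if the client ends up knowing the $\serverkey$-powers of strictly more $\GHash$-targets than the number of online queries it issued --- and such a client immediately yields an OMGDH solver, a contradiction. I also have to check the bookkeeping around the multi-set structure: the per-set hash $\hash(i\parallel\cdot)$ makes tags of different sets independent in the random oracle model, so the collection leaks nothing beyond $\{\X\cap\yseti{i}\}_i$; a malicious client submitting malformed, repeated, or non-$\GHash$ group elements still pays one oracle call per element and opens at most one target; and the programming of $\hash$ collides with an earlier answer only with negligible probability. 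The main obstacle will be making this last reduction tight: correctly classifying the client's $\hash$-queries via the DDH oracle, charging each opened target to a distinct online query, and guaranteeing that the set $\X^{*}$ handed to $\mathcal{F}$ is consistent with every tag the client can recognise, so that any residual discrepancy provably forces the extra $(\cdot)^{\serverkey}$-evaluation ruled out by the assumption.
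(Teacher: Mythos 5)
Your proposal is correct and follows essentially the same route as the paper's proof in Appendix~\ref{sec:formal_proof}: a simulation argument in the random oracle model where the client's effective input is extracted from its queries to $\hash$ on strings of the form $i \parallel z$ (classified via the DDH oracle of the gap assumption), the tag collection is simulated by programming random tags, and the only failure event---the client recognising tags for more $\serverkey$-powers than it obtained through online queries---is reduced to one-more-gap Diffie--Hellman, with the malicious-server direction handled by the blinding argument inherited from De Cristofaro et al. The only difference is presentational: the paper first gives a simulator that samples and knows $\serverkey$ itself (using $z^{\serverkey^{-1}}$ to extract keywords) and then separately constructs the OMGDH reduction to bound the failure probability, whereas you fold the challenge embedding directly into the simulator description; both amount to the same argument.
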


The \mspsi protocol does not provide correctness against a malicious server, who
can respond arbitrarily leading the client to compute an incorrect intersection.
However, from Theorem~\ref{thm:mspsi-privacy} we know that, even then, the
malicious server cannot gain any information about the client's set.

\begin{figure}[tb]
\centering
\begin{tabular}{@{}l@{\hskip2pt}c@{\hskip2pt}l@{}}
	\toprule
	\textbf{Client} & & \textbf{Server} \\
	\midrule
	$ \XDef $ & & $ \makesetsub{\yset}{\serversize} $ \\
	$ $ & & $ \YSETDef $ \\
	\midrule
	\multicolumn{3}{c}{\emph{Precomputation phase}} \\
 	 & & $\serverkey \randin \Zp$ \\
	 & & $ \pretagset{i} = \set{ \GHash(\y)^{\serverkey} \mid \y \in \yset_{i}}$ \\ 

	 & \diagramrecv{\tagCollection, \serversize} & $\begin{aligned}
		\tagCollection = &\{\hash(i \concat t) \mid \\
				 & i \in \NatNumUpTo{\serversize} \land t \in \pretagset{i}\}
	\end{aligned}$ \\

	\midrule
	 & \emph{Online phase} & \\
	$ \clientkey \randin \Zp$ & &  \\
	$  \xp_i = \GHash(\x_i)^{\clientkey} $ & \diagramsend{\langle \xp_i \rangle} & $ \xpp_i = \xp_i^{\serverkey} $\\
	$  \clientPreTag_i = \xpp_i^{\clientkey^{-1}} $ & \diagramrecv{\langle \xpp_i \rangle} &  \\
  \textbf{For} $d \in \{1, \ldots, \serversize\}:$ \\
	\multicolumn{2}{@{}l}{$\quad \clientTags^{(d)}_i = \hash(d \concat \clientPreTag_i)$} \\
	\multicolumn{2}{@{}l}{Return $\{I_d = \{ x_i \,|\, \clientTags^{(d)}_i \in \tagCollection \} \}_{d \in [\serversize]}$} \\
	\bottomrule
\end{tabular}
\caption{Our \mspsi protocol. }
\label{table:mspsi_proto}
\end{figure}

\para{Performance} 
Table~\ref{table:psi_performance} compares the performance of our \mspsi
protocol with the vanilla and the client-server \psi protocols in the multi-set
setting. We show the computation and communication cost for a server with
$\serversize$ sets and a client set with $\clientsize$ elements. \mspsi reduces
the server's online communication and computation by a factor $\serversize$. The
client can replace expensive group operations by inexpensive hash computations,
significantly reducing her online cost. The example costs for $\serversize =
1000$ (in square brackets) illustrate this reduction showing an improvement of 3
orders of magnitude.

\begin{table}[tb]
\caption{Performance of PSI variants in a multi-set scenario: $\serversize$ is
	the number of server sets; $\totalsize$ is the total number of server elements;
	$\clientsize$ is the size of the client set; and \expCost{} and \hashCost{}
	denote the cost of an exponentiation and a hash computation ($\exphashCost{} =
	\hashCost{}+ \expCost{}$). We report in square brackets the cost estimation when
	$\clientsize=\sampleclientsize$, $\serversize=\samplesetsize$, $\totalsize =
	\sampletotalsize$ (i.e., server sets have $\sampleserversize$ elements). We
	assume that group elements require 32 bytes, $\expCost{}= 100\,\mu\textrm{s}$,
	and $\hashCost{}=1\,\mu\textrm{s}$.}
\label{table:psi_performance}
\centering
\begin{tabular}{lccc@{}}
	\toprule
	 & Vanilla  & \cspsi & \mspsi\\
	\midrule
	\multicolumn{4}{@{}l@{}}{\textit{Precomputation phase}}  \\
	Server & --- & $ \exphashCost{\totalsize}$ & $ \exphashCost{\totalsize}$ \\
	Comms & --- & $ \totalsize $ & $ \totalsize$ \\
	\midrule

	\multicolumn{4}{@{}l@{}}{\textit{Online phase}} \\
	Client & $\exphashCost{2\clientsize\serversize}$ & $\exphashCost{2\clientsize\serversize}$ & $ \expCost{2\clientsize}+\hashCost{\clientsize\serversize}$\\
	 & [2 s] & [2 s] & [12 ms] \\

	Server & $ \exphashCost{\totalsize} +\expCost{\clientsize\serversize}$ & $ \expCost{\clientsize\serversize}$ & $ \expCost{\clientsize}$ \\
	 & [11 s] & [1 s] & [1 ms] \\

	Comms & $\totalsize+2\clientsize\serversize$ & $ 2\clientsize\serversize $ & $ 2\clientsize$ \\
	 & [3.84 MB] & [640 KB] & [640 B] \\
	
	\bottomrule
\end{tabular}
\end{table}

\section{Privacy-Preserving Messaging} 
\label{sec:messaging}
In this section, we introduce \name's communication system (\MIS). Journalists
use the \MIS to support \mspsi-based search and to converse anonymously after
they find a match. The \MIS respects the organization's limitations (see
Section~\ref{sec:system:requirements}). The communication costs do not hinder
the day-to-day operation of journalists, and the system supports asynchronous
communication. As the organization cannot deploy non-colluding nodes, the \MIS
uses one server. This server is trusted for availability, but not for privacy.

\name's communication system is designed to host short conversations for
discussing the sharing of documents. We anticipate that journalists will migrate
to using encrypted email or secure messengers if they need to communicate over a
long period or if they need to send documents.

\newcommand{\broadcastSend}{\textsf{BB.broadcast}\xspace}
\newcommand{\broadcastReceive}{\textsf{BB.read}\xspace}
\newcommand{\phclsetup}{\textsf{PH.setups}\xspace}
\newcommand{\phsend}{\textsf{PH.send}\xspace}
\newcommand{\phreceive}{\textsf{PH.recv}\xspace}
\newcommand{\phmailbox}{\textvar{mb}\xspace}
\newcommand{\phaddr}{\textvar{addr}\xspace}
\newcommand{\symkey}{k\xspace}
\newcommand{\lastseen}{\textvar{period}\xspace}
\newcommand{\dhkex}{k'\xspace}
\newcommand{\tdelay}{t_\textvar{delay}}
\newcommand{\delaydist}{\ensuremath{t_{\textvar{delay}}}\xspace}
\newcommand{\delaydmode}{\textvar{mode}\xspace}
\newcommand{\messagelen}{\textvar{mlen}\xspace}

\subsection{Messaging System Construction}
The server provides two components: a \emph{bulletin board} for broadcast
messages, and a \emph{pigeonhole} for point-to-point messages. We use
communication server to refer to the entity that operates both components. To
hide their network identifiers from the server and network observers,
journalists always use Tor\cite{DingledineMS04} for communication. To ensure
unlinkability, \name creates a new Tor circuit for every request.

\para{Bulletin Board} 
The bulletin board implements a database that stores broadcast messages.
Journalists interact with the bulletin board by using two protocols:
$\broadcastSend(m)$, which adds a message $m$ to the database to broadcasts it
to all journalists, and $m \gets \broadcastReceive()$ to retrieve unseen
messages.

\newcommand{\phsendraw}{\textsf{PH.SendRaw}\xspace}
\newcommand{\phrecvproc}{\textsf{PH.\-Recv\-Process}\xspace}
\newcommand{\phmonitor}{\textsf{PH.Monitor}\xspace}
\newcommand{\phcover}{\textsf{PH.Cover}\xspace}
\newcommand{\phsendhidden}{\textsf{PH.HiddenSend}\xspace}

\para{Pigeonhole} 
The pigeonhole consists of a large number of one-time-use mailboxes. Journalists
use the pigeonhole to send and receive replies to search queries and to
conversation messages. Journalists use the method \phsendraw
(Protocol~\ref{prot:ph:sendraw}) to send query replies; and the asynchronous
process \phrecvproc (Protocol~\ref{prot:ph:recvproc}) to retrieve incoming query
replies and conversation messages. Journalists use \phmonitor
(Protocol~\ref{prot:ph:monitor}) to receive notifications of new messages from
the pigeonhole and to trigger \phrecvproc.
Journalists are expected to connect to the system several times a week (see
Section~\ref{sec:system:requirements}). In agreement with ICIJ, we
decided that the pigeonhole will delete messages older than $7$ days.

Journalists may initiate a conversation after receiving a successful match. To
hide this event, we ensure that the \emph{sending of conversation messages is
unobservable}: the server cannot determine whether a journalist sends a
conversation message or not (see Definition~\ref{def:unobservability}). This
hides whether a conversation occurred, and therefore whether the search revealed
a match or not. To ensure unobservability of conversation messages, journalists
run \phcover (Protocol~\ref{prot:ph:cover}) to send cover messages at a constant
Poisson rate to \emph{every} journalist. To send a conversation message, it
suffices to replace one of the cover messages with the real message (see
\phsendhidden, Protocol~\ref{prot:ph:sendhidden}).

\newcommand{\authenc}{\textsf{AE}\xspace}
\newcommand{\authencenc}{\textsf{\authenc.enc}}
\newcommand{\authencdec}{\textsf{\authenc.dec}}

Journalists use the Diffie-Hellman key exchange to compute mailbox addresses and
message encryption keys, and an authenticated encryption scheme \authenc to
encrypt messages. 
Queriers generate a fresh key for every query and use that key to receive query
replies and to send conversation messages associated with that query. Document
owners use a medium-term key to send query replies and to receive conversation
messages from queriers (see Section~\ref{sub:designing_datashare}). When
exchanging cover traffic, journalists use fresh cover keys to send and their
medium-term keys to receive.

\begin{protocol}[$\phsendraw(\jourSK_S, \jourPK_R, m)$]
  \label{prot:ph:sendraw}
  To send message $m$ to recipient $R$ with public key $\jourPK_R$, a sender
  with private key $\jourSK_S$ proceeds as follows. Let $n_s$ be the number of
  times $S$ called $\phsendraw$ to send a message to $R$ before. The sender
  \begin{enumerate}[noitemsep,topsep=0pt]
    \item computes the Diffie-Hellman key $\dhkex = \textvar{DH}(\jourSK_S, \jourPK_R)$;
    \item computes the random rendezvous mailbox $\phaddr
      = \hash(\text{`addr'} \concat \dhkex \concat \jourPK_S \concat n_s)$ and a symmetric
      key $\symkey = \hash(\text{`key'} \concat \dhkex \concat \jourPK_S \concat n_s)$;
    \item pads the message $m$ to obtain $m'$ of length $\messagelen$, and
      computes the ciphertext $c = \authencenc(\symkey, m')$;
    \item 
      opens an anonymous connection to the pigeonhole and uploads $c$ to mailbox
      $\phaddr$.
  \end{enumerate}
  For every upload, the pigeonhole notifies all monitoring receivers (see
  \phmonitor below) that a message arrived at $\phaddr$.
\end{protocol}

\begin{protocol}[$\phrecvproc(\jourSK_R, \jourPK_S)$]
  \label{prot:ph:recvproc}
  To receive a message from sender $S$ with public key $\jourPK_S$, a receiver
  $R$ with private key $\jourSK_R$ runs the following asynchronous process. Let 
  $n_r$ be the number of times $R$ successfully received a message from $S$.
  The receiver
  \begin{enumerate}[noitemsep,topsep=0pt]
    \item computes the Diffie-Hellman key 
      $\dhkex = \textvar{DH}(\jourSK_R, \jourPK_S)$;
    \item uses $\dhkex$ to compute a random rendezvous mailbox 
      $\phaddr = \hash(\text{`addr'} \concat \dhkex \concat \jourPK_S \concat
      n_r)$ and a 
      symmetric key $\symkey = \hash(\text{`key'} \concat \dhkex \concat \jourPK_S \concat
      n_r)$;
    \item \label{step:ph_receive_wait} waits until \phmonitor (see below) receives
      a notification of a new message on address $\phaddr$.
      If no message is posted to $\phaddr$ in seven days, the process terminates;
    \item
      opens an
      anonymous connection to the pigeonhole and downloads the ciphertext $c$ at
      address $\phaddr$ (if there was no message due to a false positive, the
      process continues at step \ref{step:ph_receive_wait}); and
    \item decrypts the message $m' = \authencdec(\symkey, c)$ and returns the
      unpadded message $m$ or $\bot$ if decryption failed.
  \end{enumerate}
  When the receiver goes offline, this process is paused and resumed when the
  receiver comes online again.
\end{protocol}

A sender may send multiple messages without receiving a response.
The receiver calls $\phrecvproc$ repeatedly to receive all messages ($n_r$
increases every time). To ensure that the participants derive the correct
addresses and decryption keys, participants keep track of the message counters $n_s,
n_r$ for each pair of keys $(\jourSK_S, \jourPK_R)$ and $(\jourSK_R,
\jourPK_S)$, respectively.

\begin{protocol}[$\phmonitor$]
  \label{prot:ph:monitor}
  Journalists run the $\phmonitor$ process to monitor for incoming messages. The
  receiver
  \begin{enumerate}[noitemsep,topsep=0pt]
    \item \label{step:monitor:bulk}
      opens an anonymous monitoring connection to the pigeonhole and requests a list 
      of addresses $\phaddr$ that received a message since she was last online
    \item \label{step:monitor:feed}
      via the same anonymous connection, receives notifications of addresses
      $\phaddr$ with new messages.
  \end{enumerate}
  Addresses $\phaddr$ received in step~\ref{step:monitor:bulk}
  or~\ref{step:monitor:feed} can cause the \phrecvproc processes to continue
  past step~\ref{step:ph_receive_wait}. To save bandwidth, the pigeonhole sends
  a cuckoo filter~\cite{FanAK13} that contains the addresses in
  step~\ref{step:monitor:bulk}. Moreover, the pigeonhole only sends the first
  two bytes of the address in step~\ref{step:monitor:feed} (\phrecvproc handles
  false positives).
\end{protocol}

\newcommand{\ratecoverkey}{\lambda_k}
\newcommand{\tcoverkey}{t_{k}}
\newcommand{\expdelay}[1]{\textrm{Exp}(#1)}
\newcommand{\coversk}{\jourSK_c}
\newcommand{\coverpk}{\jourPK_c}
\newcommand{\trecipient}[1]{t_{#1}}
\newcommand{\ratecoversend}{\lambda_c}

The \phcover and \phsendhidden protocols ensure conversation messages are
unobservable. Senders store a queue of outgoing conversation messages for each
recipient.

\begin{protocol}[$\phcover(\jourSK_R)$]
  \label{prot:ph:cover}
  As soon as the journalists come online, they start the \phcover process. Let
  $\jourSK_R$ be the medium-term private key, and $\jourPK_1,\ldots,
  \jourPK_{n-1}$ be the medium-term public keys of the other journalists. The
  process runs the following concurrently:
  \begin{itemize}[noitemsep,topsep=0pt]
    \item \emph{Cover keys.} Draw an exponential delay
      $\tcoverkey \gets \expdelay{1/\ratecoverkey}$, and wait for time $\tcoverkey$.
      Generate a fresh cover key-pair $(\coversk, \coverpk)$ and upload
      $\coverpk$ to the bulletin board by calling $\broadcastSend(\coverpk)$.
      Repeat.
    \item \emph{Sending messages.}
      Wait until the first cover key has been uploaded. For each recipient
$\jourPK_i$, proceed as follows:
      \begin{enumerate}[noitemsep,topsep=0pt]
      \item Draw $\trecipient{i} \gets \expdelay{1/\ratecoversend}$ and wait
         for time $\trecipient{i}$.
       \item If the send queue for $\jourPK_i$ is not empty, let $m_i$ be the
         first message in the queue and $\jourSK_q$ the corresponding query key.
         Send the message by calling $\phsendraw(\jourSK_q, \jourPK_i, m_i)$ and
         remove $m_i$ from the queue. Otherwise, let
         $\coversk$ be the most recent private cover key and 
         $m_i$ be a dummy message.
         Send the message by calling $\phsendraw(\coversk, \jourPK_i, m_i)$.
        \item Repeat. 
       \end{enumerate}
     \item \emph{Receiving cover messages.} For each of the non-expired cover keys
       $\coverpk'$ on the bulletin board, call the process $m \gets \phrecvproc(\jourSK_R,
       \coverpk').$ If $m$ is a real message (see Section~\ref{sub:designing_datashare})
       forward the message to \name, otherwise discard. Repeat.
  \end{itemize}
  This process stops when the user goes offline, and $\phrecvproc$ processes 
  started by \phcover are canceled.
\end{protocol}

\begin{protocol}[$\phsendhidden(\jourSK_S, \jourPK_R, m)$]
  \label{prot:ph:sendhidden}
  To send a message $m$ to recipient $R$ with public key $\jourPK_R$, sender $S$ with
  private key $\jourSK_S$ places $m$ in the send queue for $\jourPK_R$.
\end{protocol}

\begin{figure*}[tbp]
  \centering
  \includegraphics[width=0.32\textwidth]{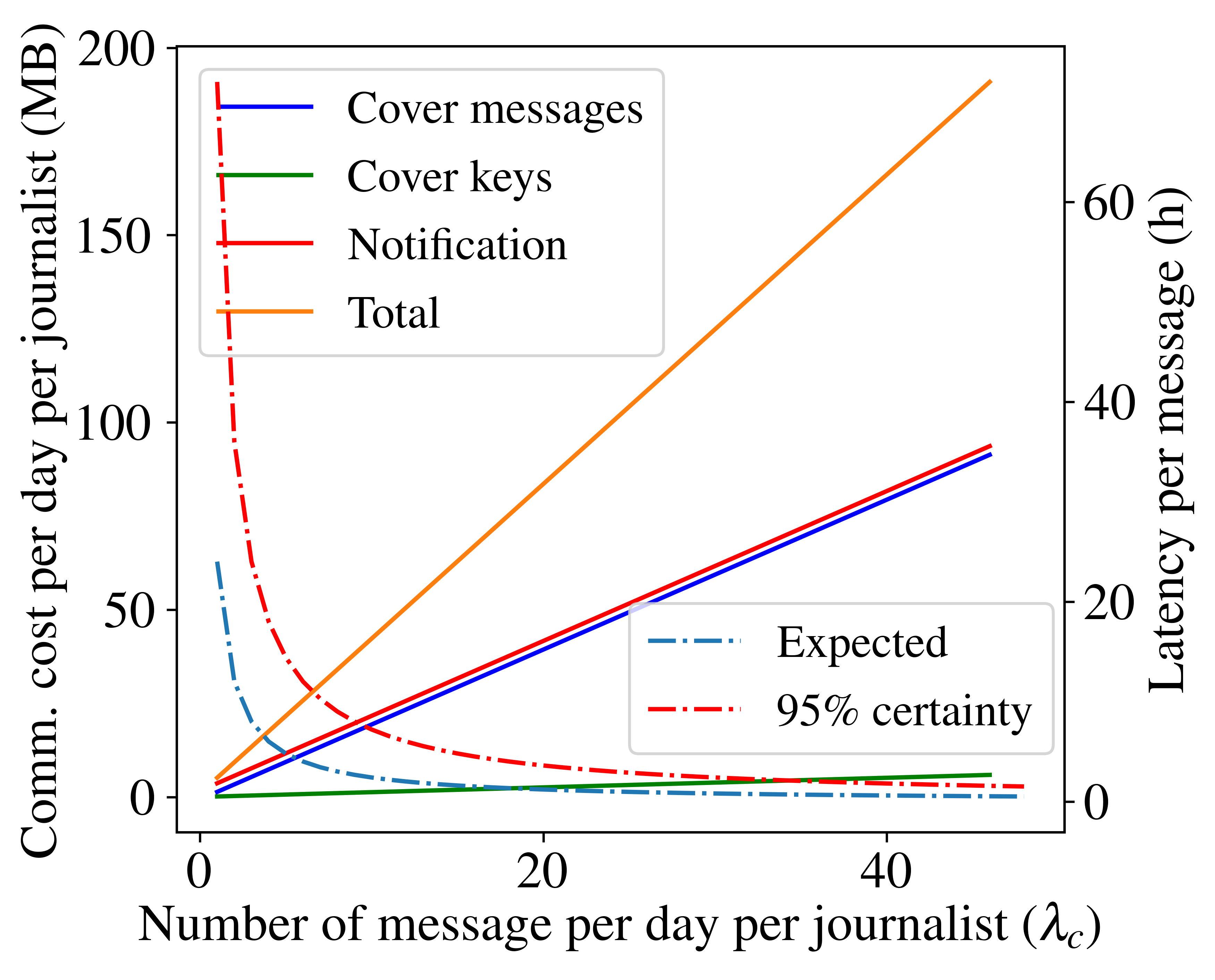}
  \hfill
  \includegraphics[width=0.30\textwidth]{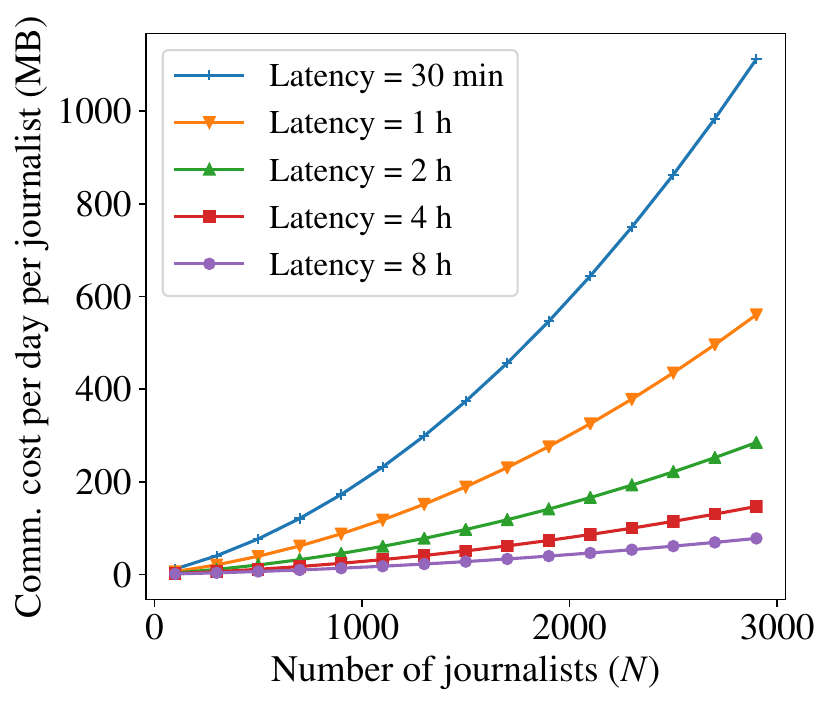}
  \hfill
  \includegraphics[width=0.32\textwidth]{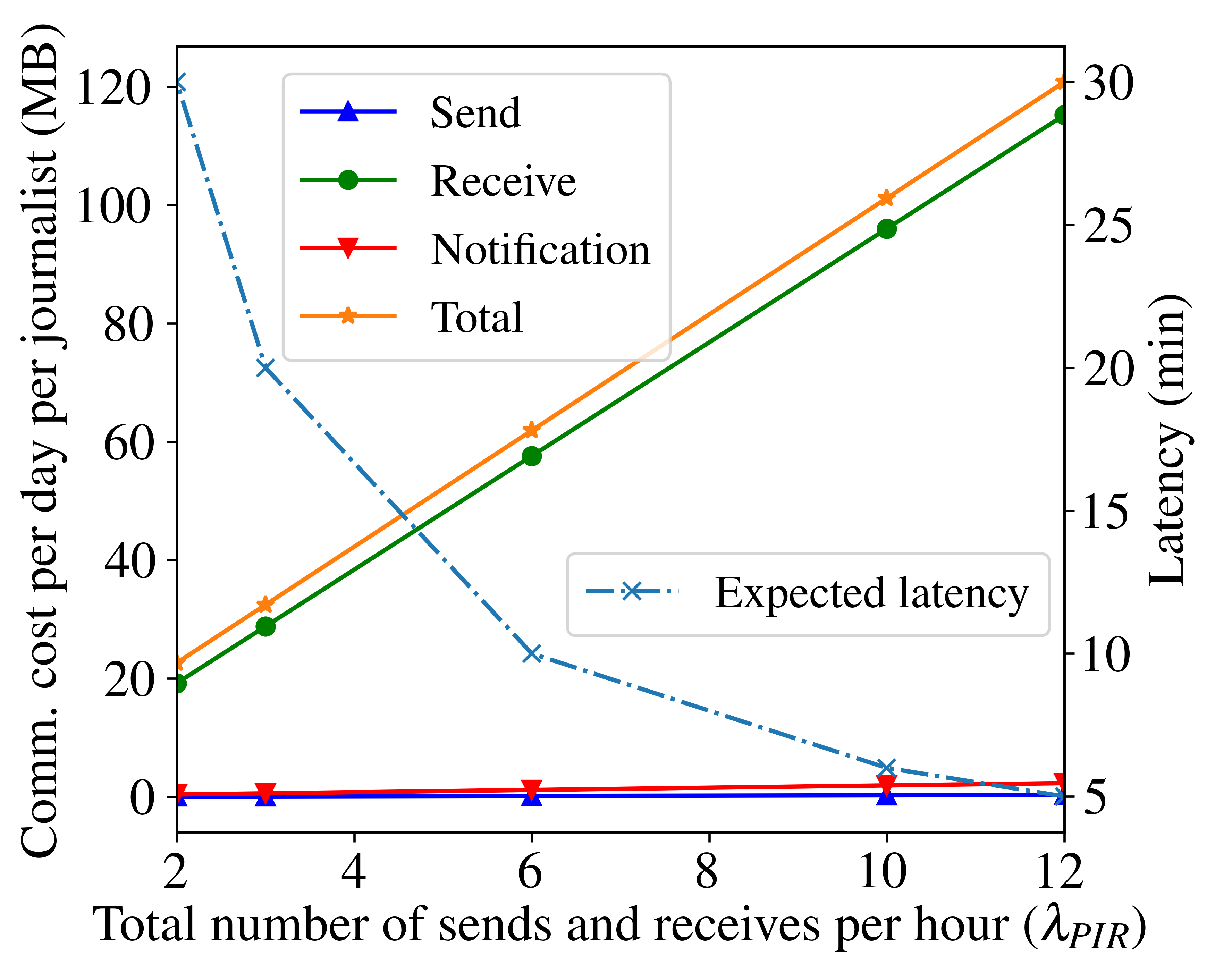}
  \caption{Left: bandwidth (left axis) and latency (right axis) for running the
    communication system (CS) with 1000 journalist for given rate
    $\ratecoversend$. Middle: varying the number of journalists and average
    latency in the CS. Right: bandwidth (left axis) and latency (right axis) for
    running the PIR system with 1000 journalists.}
  \label{fig:messaging_cost}
\end{figure*}

\subsection{Messaging Service Privacy}
\label{sub:messaging_privacy}
We first define unobservability then prove that conversation messages sent using
\phsendhidden are unobservable.

\begin{definition}[Unobservability]
  \label{def:unobservability}
  A conversation message is unobservable if all PPT adversaries have a negligible
  advantage in distinguishing a scenario in which the sender $S$ sends a
  conversation message to the receiver $R$, from a scenario where $S$ does not
  send a conversation message to $R$.
\end{definition} 

\begin{theorem}
  \label{thm:unobservable}
  Messages sent using \phsendhidden are unobservable towards any adversary that
  controls the communication server but does not control the sender or the
  receiver, assuming the receiver awaits both conversation and cover messages.
  This statement is also true when the adversary can break the network anonymity
  Tor provides.
\end{theorem}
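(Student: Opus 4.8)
I would prove this with a short game-hopping argument that isolates the single place where the two scenarios of Definition~\ref{def:unobservability} can differ. First I would make the structural observation that \phsendhidden does not change \emph{when} or \emph{how many} uploads $S$ performs: by Protocol~\ref{prot:ph:sendhidden} it only appends to a per-recipient send queue, and every actual upload is done by the \emph{Sending messages} step of \phcover (Protocol~\ref{prot:ph:cover}), which for each recipient $\jourPK_i$ fires at the arrival times of an independent Poisson process of rate $\ratecoversend$ and at each firing uploads exactly one length-$\messagelen$ ciphertext to the mailbox derived from the current fresh cover key---whether or not the dequeued message is real. I would likewise note that the cover-key schedule (rate $\ratecoverkey$) and the bulletin-board traffic it generates are the same in both scenarios, and that---by the hypothesis that $R$ keeps awaiting both conversation and cover messages---$R$'s behaviour in \phmonitor and \phrecvproc (which mailboxes it polls and downloads from) is a deterministic function of which mailboxes received an upload, hence also scenario-independent. (Without that hypothesis, $R$ could for instance stop polling a cover-key mailbox after decrypting a real message, which would be observable.) The upshot: the adversary's view when $S$ sends $m^{\ast}$ to $R$ equals its view when $S$ sends nothing, \emph{except} that one ciphertext uploaded to an $R$-mailbox encrypts a padding of $m^{\ast}$ rather than a dummy of the same length.

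\textbf{The two hops.}
Next I would do two hops. Hop~1: for every cover key $\coverpk$ of $S$ and $R$'s medium-term key $\jourPK_R$, replace the Diffie--Hellman secret $\dhkex=\textvar{DH}(\coversk,\jourPK_R)$ that is fed into $\hash$ to derive the mailbox address and symmetric key by a fresh uniform value. In the random-oracle model for $\hash$ this changes the adversary's view only if it queries $\hash$ on an input containing the true $\dhkex$, and I would bound that probability by the standard reduction that embeds a CDH instance into $(\coverpk,\jourPK_R)$. I would check that this embedding survives corrupted third-party journalists: a corrupted $J$ knows its secret key $\jourSK_J$, so $S$'s cover traffic \emph{towards $J$} can be simulated from $\coverpk$ and $\jourSK_J$ (it uses $\textvar{DH}(\coversk,\jourPK_J)=\textvar{DH}(\jourSK_J,\coverpk)$), while $J$ still learns nothing about $\textvar{DH}(\coversk,\jourPK_R)$; and $R$'s side of the $S$--$R$ cover channel is simulated using the mailboxes and keys the reduction just sampled at random. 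After Hop~1 those mailbox addresses and keys are uniform and independent of the rest of the view. Hop~2: replace the one real ciphertext by an encryption, under that now-uniform one-time key, of a dummy of length $\messagelen$; this is immediate from IND-CPA security of \authenc. I would also remark that a malicious server dropping, reordering, or injecting ciphertexts gains nothing---injected ciphertexts fail authenticated decryption and are discarded by $R$, and dropping or reordering is equally available and equally uninformative in both worlds.

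\textbf{Wrap-up and the broken-Tor case.}
After Hop~2 the view no longer depends on $m^{\ast}$; applying the same two hops in reverse starting from the ``$S$ sends nothing'' world reaches an identical distribution, so any PPT adversary's advantage is at most the sum of the negligible CDH and IND-CPA terms. Finally I would observe that none of the three ingredients---queue-independence of the send and cover-key schedules, pseudorandomness of the mailbox addresses and keys, IND-CPA security of the ciphertexts---used the adversary's ignorance of which journalist opened which Tor circuit, so the whole argument carries over verbatim when the adversary additionally learns the real endpoints of every connection; hence the statement also holds against an adversary who breaks Tor's anonymity. The hard part, I expect, is not the cryptographic hops but the structural step: arguing rigorously that ``swapping a cover message for a real one'' perturbs nothing in the observable trace distribution beyond a single plaintext (this is exactly where the ``$R$ awaits both'' assumption is needed), and making sure the CDH reduction still goes through with corrupted third-party journalists in the mix. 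If a conversation consists of several messages, unobservability of the whole exchange follows by one further hybrid over its messages, each of which is sent via \phsendhidden.
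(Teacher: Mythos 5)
Your proposal is correct and follows essentially the same route as the paper's own proof: both reduce unobservability to (i) the queue-independence of the Poisson send schedule and of the receiver's polling behaviour, (ii) indistinguishability of the derived mailbox addresses and symmetric keys under a Diffie--Hellman assumption, (iii) hiding of content and length by padded, fixed-length encryption, and (iv) the observation that none of this relies on Tor's anonymity. The only difference is presentational: you formalize the argument as explicit game hops under CDH in the random-oracle model and treat corrupted third-party journalists explicitly, where the paper argues the same points informally under DDH.
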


\begin{proof}
  To show that conversation messages are unobservable, we must prove that the
  following two scenarios are indistinguishable: the scenario in which the
  sender sends a conversation message (sent by \phcover after a conversation
  message has been queued using \phsendhidden), and the scenario in which the
  sender sends a cover message (sent by \phcover when no conversation message
  has been queued). The intuition behind this proof is that the conversation and
  cover messages are indistinguishable: (1) both are encrypted so that the
  adversary cannot distinguish them based on content; and (2) conversation
  messages replace cover messages, so they are sent using the same schedule.

  All messages go through the pigeonhole. For each message, the adversary
  observes (1) the pigeonhole address, (2) the content, (3) the length, (4) the
  timestamps at which the message was posted and retrieved, and 
  -- in the worst case scenario in which the adversary can break
  the anonymity Tor provides -- (5) the sender and the receiver. 
    
  The content and pigeonhole address of messages are cryptographically
  indistinguishable. Senders and receivers compute rendezvous mailbox addresses
  by using a Diffie-Hellman key exchange based on either the query public key
  and the owner's public key (when the message is a conversation messages) or
  the sender and receiver's cover keys (when the message is a cover message). As
  the adversary does not control the sender or the receiver, it does not know
  the corresponding private keys in either scenario. Under the decisional
  Diffie-Hellman assumption, the adversary cannot distinguish between mailbox
  addresses for conversation messages and mailbox addresses for cover messages.
  Under the same DH assumption, the adversary cannot learn the symmetric key $k$
  that is used to encrypt the message either. Moreover, all messages are padded
  to a fixed length of $\messagelen$. Hence, the adversary cannot distinguish
  between the two situations based on message content or length.
  As a result, all messages sent between sender $S$ and receiver $R$ are
  indistinguishable to the adversary on the cryptographic layer. 

  We now show that the post and retrieve times of the messages are also independent of
  whether the message is a cover message or a conversation message:

  \noindent\emph{Sender}. The ``cover keys'' and ``sending messages'' processes
  of \phcover are, by design, independent of whether a conversation message
  should be sent or not. 
  The sender sends
  (real or cover) messages to the recipient at a constant rate $\ratecoversend$.
  The send times are independent of whether the sender has a real message for
  the receiver.

  \noindent\emph{Receiver}. The receiver is listening to both conversation and
  cover messages from the sender. As soon as it a new message notification arrives, 
  \phrecvproc will retrieve this message. Therefore, the retrieval time 
  does \emph{not} depend on the type of message.
\end{proof}

As a corollary of the unobservability proof, we have the following theorem.
\begin{theorem}
  \label{thm:secrecy}
  The pigeonhole protects the secrecy of messages from non-participants
  including the communication server.
\end{theorem}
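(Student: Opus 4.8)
The plan is to derive this as a direct corollary of the cryptographic-layer indistinguishability already established inside the proof of Theorem~\ref{thm:unobservable}. First I would fix a precise notion of secrecy: for every pair of equal-length messages $m_0,m_1$ of the adversary's choosing, an adversary that controls the communication server (and, in the worst case, can break Tor's anonymity) but controls neither the sender $S$ nor the receiver $R$ cannot distinguish a run in which $S$ delivers $m_0$ to $R$ through the pigeonhole from a run in which $S$ delivers $m_1$. Because \phsendraw{} pads every message to the fixed length $\messagelen$, restricting to equal-length messages is without loss of generality, so this notion captures the intended guarantee. Cover messages are treated as a special case of this statement, and the bulletin board is irrelevant since it only ever carries public keys.

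The reduction proceeds in two layers. The symmetric key protecting a message is $\symkey = \hash(\text{`key'}\concat\dhkex\concat\jourPK_S\concat n_s)$ with $\dhkex = \textvar{DH}(\jourSK_S,\jourPK_R)$; the matching mailbox is $\phaddr = \hash(\text{`addr'}\concat\dhkex\concat\jourPK_S\concat n_s)$. Since the adversary controls neither endpoint, it knows neither $\jourSK_S$ nor $\jourSK_R$, so $\dhkex$ is a Diffie--Hellman value on two public keys with hidden discrete logs. Modeling $\hash$ as a random oracle, as long as the adversary never queries $\hash$ on a string whose prefix contains $\dhkex$, both $\symkey$ and $\phaddr$ are uniform and independent of the adversary's view; a standard DDH argument (embedding the challenge elements as the two public keys, answering all random-oracle queries consistently, and declaring the bad event when a query on $\dhkex$ appears) shows that such a query occurs only with negligible probability. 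Conditioned on $\symkey$ being a fresh uniform key unknown to the adversary, I would then invoke the confidentiality of the authenticated encryption scheme \authenc{}: the ciphertext $c = \authencenc(\symkey, m')$ is indistinguishable from an encryption of a fixed dummy string of length $\messagelen$, so it leaks nothing about $m'$. Chaining the hops — one per message to replace its key by a random key (charged to the DDH advantage plus the random-oracle collision terms) and one per message to replace its plaintext by the dummy (charged to the advantage against \authenc{}) — gives a total distinguishing advantage negligible in $\secpar$, which is exactly the claimed secrecy.

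The main obstacle I expect is the key-reuse bookkeeping. A single key pair $(\jourSK_S,\jourPK_R)$ — in particular the document owners' medium-term keys — is used for a whole family of messages indexed by the counter $n_s$, all derived from one Diffie--Hellman secret $\dhkex$ via the random oracle. One must argue that a single DDH instance still suffices to randomize the entire family $\{\hash(\cdot\concat\dhkex\concat\cdot\concat n_s)\}_{n_s}$, that the number of hybrid steps stays polynomial even when medium-term keys are shared across many conversations, and that the adversary's ability to post arbitrary (malformed) ciphertexts to mailboxes as the communication server does not help it learn anything about the honestly sent messages. Once the reduction is set up so that the bad ``queried $\dhkex$'' event is detectable and its probability bounded by the DDH advantage, the remaining steps are routine.
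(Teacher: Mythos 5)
Your proposal is correct and takes essentially the same route as the paper, which proves Theorem~\ref{thm:secrecy} only as a one-line corollary of the cryptographic-layer argument inside the proof of Theorem~\ref{thm:unobservable}: since the adversary controls neither endpoint, DDH implies it cannot learn the symmetric key $\symkey$, and the fixed-length padded ciphertext under \authenc therefore reveals nothing about the message. Your explicit game-based reduction (including the random-oracle treatment of the counter $n_s$ and the key-reuse bookkeeping) is just a more careful spelling-out of that same argument.
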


To hide their (network) identities from the communication server, users of \name
communicate with the communication server via Tor. Sender anonymity hides
queriers' identities from document owners, and receiver anonymity hides document
owners' identities from queriers. Using Tor ensures these properties, even when
journalists collude with the communication server. Formally, we define sender
and receiver anonymity as follows:

\begin{definition}[Sender anonymity]
  \label{def:sender_anon}
  A communication system provides sender anonymity if any PPT adversary has a
  negligible advantage in guessing the sender of a message.
\end{definition} 

\begin{definition}[Receiver anonymity]
  \label{def:receiver_anon}
  A communication system provides receiver anonymity if any PPT adversary has a
  negligible advantage in guessing the receiver of a message.
\end{definition} 

\begin{theorem}
  \label{thm:anonymity}
  Assuming that Tor provides sender and receiver anonymity with respect to the
  communication server, the communication system provides sender and receiver
  anonymity at the network layer against adversaries who control the
  communication server and a subset of journalists.
\end{theorem}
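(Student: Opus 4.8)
The plan is to reduce both anonymity notions directly to the corresponding guarantee of Tor. The structural observation driving the proof is that \emph{every} interaction between a journalist and the communication server---the \broadcastSend and \broadcastReceive calls on the bulletin board, the upload in \phsendraw, the download in \phrecvproc, and the connection opened by \phmonitor---is carried over a \emph{freshly built} Tor circuit. Hence, at the network layer the communication server never observes a journalist's address; it only observes a connection arriving from a Tor exit relay, together with the application payload carried on it. Since the theorem claims anonymity only \emph{at the network layer}, the whole question reduces to: (i) does Tor already hide which journalist is behind that connection, and (ii) does the application payload on that connection leak the honest endpoint's network identity? Item~(i) is the assumption; item~(ii) is what remains to be argued.

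For the reduction I would proceed as follows. Suppose $\mathcal{A}$ is a PPT adversary that controls the communication server and a subset $\mathcal{C}$ of the journalists and breaks sender anonymity (Definition~\ref{def:sender_anon}) with non-negligible advantage. I build $\mathcal{A}'$ against Tor's sender anonymity: $\mathcal{A}'$ runs $\mathcal{A}$ internally, plays the communication server honestly, and simulates each corrupted journalist in $\mathcal{C}$ using the key material $\mathcal{A}$ chose for it. Whenever an honest journalist would connect to the server, $\mathcal{A}'$ submits the corresponding payload to its own Tor challenger and forwards the resulting network-level transcript to $\mathcal{A}$; when $\mathcal{A}$ finally outputs a guess for the sender of the challenge message, $\mathcal{A}'$ echoes it. If $\mathcal{A}$'s view inside $\mathcal{A}'$ is distributed exactly as in the real sender-anonymity experiment, then $\mathcal{A}'$ inherits $\mathcal{A}$'s advantage, contradicting Tor's sender anonymity; hence that advantage is negligible. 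Receiver anonymity follows by the symmetric argument, embedding the \phmonitor and \phrecvproc connections into Tor's receiver-anonymity challenger.

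The crux---and the step I expect to be the main obstacle---is showing that $\mathcal{A}'$ can faithfully produce the application payloads \emph{without} knowing the real honest endpoint, even though some journalists are corrupted. This is delicate because a corrupted journalist who is the legitimate recipient of a pigeonhole message knows the Diffie--Hellman secret, can recompute the mailbox address $\phaddr$, and can decrypt the ciphertext, so one cannot simply claim that all payloads look uniformly random to $\mathcal{A}$. The correct and sufficient statement is conditional: given everything $\mathcal{A}$ is entitled to---the secret keys of $\mathcal{C}$, the public bulletin board, and the plaintexts of messages to or from $\mathcal{C}$---the data exchanged on a connection \emph{between two honest parties} is independent of \emph{which} honest journalist occupies each endpoint. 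This reuses exactly the machinery from the proof of Theorem~\ref{thm:unobservable}: for an honest--honest pair the address $\phaddr$ and the symmetric key $\symkey$ are $\hash$ evaluated on a Diffie--Hellman value no adversarial party can compute, hence pseudorandom under DDH in the random-oracle model, and every ciphertext is padded to the fixed length $\messagelen$. Consequently $\mathcal{A}'$ may generate these payloads itself as fresh random strings, which makes the embedding into the Tor challenger well defined.

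Finally, I would remark on what the bound means: the constant-rate cover traffic generated by \phcover keeps every online honest journalist a plausible sender and a plausible receiver for every message, so the anonymity set underlying the reduction is the set of honest online journalists, and the guarantee we obtain has exactly the same flavour---and the same anonymity-set size---as the Tor guarantee it is reduced to. No step requires assumptions beyond Tor's network-layer anonymity and the DDH/random-oracle facts already established for the messaging layer, so the proof is essentially a composition argument.
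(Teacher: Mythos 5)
Your proposal is correct and takes essentially the same route as the paper: every interaction with the server runs over a fresh Tor circuit, so network-layer anonymity reduces directly to Tor's assumed guarantee, and the application payloads are argued not to re-identify honest endpoints by invoking the address/ciphertext indistinguishability already established in the unobservability proof. Your write-up is in fact more rigorous than the paper's brief informal argument (you make the reduction to the Tor challenger explicit and handle corrupted recipients carefully); the only point you omit is the paper's remark that receiver anonymity is vacuous for bulletin-board broadcasts since those messages have no intended receiver.
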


\begin{proof}
  All messages go through the communication system and journalists never
  directly connect with each other. We study separately the anonymity provided
  by the bulletin board and the pigeonhole.

  To publish an encrypted message (the query) to the bulletin board, senders run
  the \broadcastSend protocol over a fresh Tor circuit. Sender anonymity is
  guaranteed by Tor. The bulletin board broadcasts all messages to all
  journalists. As these messages do not have an intended receiver, receiver
  anonymity is not relevant. 

  Both senders and receivers use fresh Tor circuits when communicating with the
  communication servers. This ensures that communications are unlinkable at the
  network layer, and that the adversary cannot identify the journalist from
  network artifacts. As shown in the unobservability proof, the pigeonhole
  cannot distinguish senders' or receivers' given addresses or encrypted
  messages.
\end{proof}

This theorem only addresses the anonymity at the network layer. We discuss
anonymity at the application layer, i.e., based on the content of messages, in
Section~\ref{sub:leakage_analysis}.

Tor does not provide sender or receiver anonymity against global passive
adversaries. To protect against global passive adversaries, \name will migrate
to stronger network layer anonymity systems (e.g., the Nym system~\cite{Nym},
based on Loopix~\cite{PiotrowskaHEMD17}) 

\subsection{Cost Evaluation}
\label{sec:mess_eval}
To guarantee unobservability, we schedule the traffic according to a Poisson
distribution. However, such strong protection comes at a cost~\cite{Das0MK18}:
Regardless of whether they have zero, one, or many conversations, every
journalist sends messages at a rate $\ratecoversend$ to the other $N$
journalists, i.e., sends $\ratecoversend N$ messages per day. Consequently,
every journalist also receives $\ratecoversend N$ messages a day.

Figure~\ref{fig:messaging_cost}, left, illustrates the trade-off between
bandwidth overhead and latency for a given cover traffic rate. When journalists
send few messages a day, the bandwidth requirements are very low. For instance,
setting $\lambda_c$ to be 4 messages per day requires every journalist to use
16.5\,MB per day, including the sending of notifications and the updating of
cover keys. For these messages to be unobservable, however, journalists have to
wait on average six hours between messages (less than 18 hours in 95\% of the
cases).
If journalists require higher throughput they must consume more bandwidth. For
example, setting $\lambda_c=48$ messages a day ensures that messages are sent
within half an hour on average (and within 90 minutes with probability 95\%).
Storing messages from the last seven days on the pigeonhole for $1000$
journalists and send rate of $\lambda_c=48$ requires $390$\,GB, which is
manageable for a server.

The latency we report in Figure~\ref{fig:messaging_cost} assumes that
journalists are online. If they disconnect from the system before a message is
sent, journalists must, after coming online again, first upload a new cover key
then draw a new sample from $\expdelay{\lambda_c}$ to decide when to send their
message. We propose to set the update latency $\ratecoverkey$ to $\ratecoversend
/ 4$, so that the initial latency is at most 25\% more than the latency under
normal circumstances. 

For the current size of the population that will use \name, 250 journalists (see
Section~\ref{sec:system:requirements}), the bandwidth can be kept reasonable at
the cost of latency. However, as journalists send cover traffic to everyone, the
bandwidth cost increases quadratically with the size of the population, and
becomes pretty heavy after reaching 2000 journalists, see
Figure~\ref{fig:messaging_cost}, center.

\newcommand{\ratecoverpir}{\lambda_{\textrm{PIR}}}
\para{An Alternative Construction}
If the traffic requirements become too heavy for the organization members,
bandwidth can be reduced by increasing the computation cost at the pigeonhole
server. Instead of using cover traffic to \emph{all} journalists to hide the
mailboxes that contain real messages, journalists can retrieve messages using
computational private information retrieval
(PIR)~\cite{KushilevitzO97,AngelCLS18}.

In this approach, senders send cover messages at a rate $\ratecoverpir$,
\emph{independent} of the number of journalists, to random mailboxes. When they
have a real message, they send it instead of a cover message. They use the same
rate to retrieve messages using PIR. This approach hides which messages are
getting retrieved from the pigeonhole and breaks the link between the send and
receive time. As a result, the server's observation of the system is independent
of whether journalists send a real message or not. 

We illustrate the trade-off associated with this approach in
Figure~\ref{fig:messaging_cost}, right. We use SealPIR~\cite{AngelCLS18} to
retrieve cover and conversation messages. Responding to a PIR request in a
scenario of 1000 journalists and a send rate of 6 messages per hour takes $12$
seconds. Therefore, we assume a server with 24 cores (approx 1300 USD/month in
AWS) can handle this scenario. We see that this approach enables the system to
send conversation messages at a higher rate and a lower cost. For example,
sending 6 messages per hour (144 messages a day) requires around 59\,MB.
However, as opposed to the Poisson cover approach described in the previous
section, this rate limits the total number of messages per day \emph{regardless
of recipient}. As a result, depending on the number of receivers journalists
want to communicate with on average, one or the other method could be more
advantageous.


\section{The \DataShare System} 
\label{sec:the_datashare_system}
We now present \name, an asynchronous decentralized peer-to-peer document search
engine. \name combines the multi-set private set intersection protocol
(Section~\ref{sec:multi_set_psi}), the privacy-preserving communication system
(Section~\ref{sec:messaging}), and an anonymous authentication mechanism.

\newcommand{\bs}{\textsf{BS}\xspace}
\newcommand{\bssetup}{\textsf{BS.Setup}\xspace}
\newcommand{\bssign}{\textsf{BS.Sign}\xspace}
\newcommand{\bsverify}{\textsf{BS.Verify}\xspace}

\subsection{Preliminaries} 
\label{sub:preliminaries}
\para{Processing Documents}
The primary interests of investigative journalists are named entities, such as
people, locations, and organizations (see Section
\ref{sec:system:requirements}). \ICIJ has already developed a tool
\cite{DatasharePlatform} that uses natural language processing to extract named
entities from documents. After the extraction, the tool transforms named
entities into a canonical form to reduce the impact of spelling variation in
names. We employ this tool to canonicalize queries. An advantage of using this
tool over simply listing all words in a document is that it reduces the number
of keywords per document: the majority of documents have less than 100 named
entities.

\para{Search}
\name uses the \mspsi protocol as a pairwise search primitive between
journalists. The querier acts as \mspsi client, and the client's set represents
the querier's search keywords. The document owners act as \mspsi servers, where
the server's $N$ sets represent the keywords in each of the owner's $N$
documents. Each document owner has their own \emph{different} corpus and secret
key. We say a document is a match if it contains \emph{all} query keywords
(i.e., the conjunction of the query keywords, see
Section~\ref{sec:system:requirements}). \mspsi speeds up the computation and
reduces the communication cost by a factor of $N$ compared to the naive approach
of running one PSI protocol per document.

\para{Authenticating Journalists}
Only authorized journalists, such as members of the organization or
collaborators, are allowed to make queries and send conversation messages.
\name's authentication mechanism operates in epochs. In each epoch journalists
obtain a limited number of anonymous tokens. Tokens can be used only once, which
limits the number of queries that journalists can make per epoch. Compromised
journalists, therefore, can extract limited information from the system by
making search queries. We considered using identity-escrow mechanisms to
mitigate damage by misbehaving journalists but in agreement with the
organization, we decided against this approach as such mechanisms could too
easily be abused to identify honest journalists.

Recall from Section~\ref{sec:system:requirements} that journalists trust the
organization as an authority for membership and already have means to
authenticate themselves to the organization. Therefore, the organization is the
natural design choice for issuing anonymous tokens. We note that, even if the
organization is compromised, it can do limited damage as it cannot link queries
or conversations to journalists (because of token anonymity). However, it can
ignore the rate limit. This would enable malicious queriers to extract more
information than allowed. To mitigate this risk, \name could also work with
several token issuers and require a threshold of valid tokens.

For the epoch duration, \ICIJ proposes one month to provide a good balance
between protection and ease of key management. Rate-limits are flexible. The
organization can decide to provide additional one-time-use tokens to journalists
who can motivate their need for extra tokens. Although this reveals to the
organization which journalists are more active, it does not reveal what they use
the tokens for.

\parait{Instantiation.} 
Tokens take the form of a blind signature on an ephemeral signing key. We use
Abe's blind signature (BS) scheme~\cite{Abe01}. The organization runs
$\bssetup(1^\secpar)$ to generate a signing key $\gsmsk$ and a public
verification key $\gsmpk$. To sign an ephemeral key $\jourPK_T$, the journalist
and the organization jointly run the $\bssign()$ protocol. The user takes as
private input the key $\jourPK_T$, and the organization takes as private input
its signing key $\gsmsk$. The user obtains a signature $\cred$ on $\jourPK_T$.
The verification algorithm $\bsverify(\gsmpk, \cred, \jourPK_T)$ returns $\top$
if $\cred$ is a valid for $\jourPK_T$ and $\bot$ otherwise. These blind
signatures are anonymous. The blindness property of \bs ensures that the signer
cannot link the signature $\cred$ or the key $\jourPK_T$ to the journalist that
ran the corresponding signing protocol.

Let $\jourSK_T$ be the private key corresponding to $\jourPK_T$. We call the tuple $T =
(\jourSK_T, \cred)$ an authentication token. Journalists use tokens to 
authenticate themselves before issuing a query or sending a
message. To authenticate themselves, journalists create a signature $\sigma$ on
the message using $\jourSK_T$ and append the signature $\sigma$ and blind
signature $\cred$ on $\jourPK_T$. Non-authenticated messages and queries are dropped by other journalists. 

Anonymous authentication with rate limiting could have been instantiated
alternatively with $n$-times anonymous credentials~\cite{CamenischHKLM06},
single show anonymous credentials~\cite{Brands00, BaldimtsiL13}, or regular
anonymous credentials~\cite{AuSMC13,PointchevalS16} made single-show. We opted
for the simplest approach.

\para{Cuckoo Filter} 
\label{par:cuckoo_filter}
\name uses cuckoo filters~\cite{FanAK13} to represent tag collections in a
space-efficient manner. The space efficiency comes at the price of having false
positives when answering membership queries. The false negative ratio is always
zero. The false positive ratio is a parameter chosen when instantiating the
filter. Depending on the configuration, a cuckoo filter can compress a set to
less than two bytes per element regardless of the elements' original size.

Users call $\cfcompress(\cfinputset, \cfparams)$ to compute a cuckoo filter
$\cuckoo$ of the input set $\cfinputset$ using the parameters specified in
$\cfparams$. Then, $\cfcheck(\cuckoo, \cfelem)$ returns $\top$ if $\cfelem$ was
added to the cuckoo filter, and $\bot$ otherwise. For convenience, we write
$\cfintersection(\cuckoo, \cfinputset')$ to compute the intersection
$\cfinputset' \cap \cfinputset$ with the elements $\cfinputset$ contained in the
cuckoo filter. The function $\cfintersection$ can be implemented by running
$\cfcheck$ on each element of $\cfinputset'$.


\newcommand{\systemsetup}{\textsf{SystemSetup}\xspace}
\newcommand{\journalistsetup}{\textsf{JournalistSetup}\xspace}
\newcommand{\gettoken}{\textsf{GetToken}\xspace}
\newcommand{\publish}{\textsf{Publish}\xspace}
\newcommand{\querycmd}{\textsf{Query}\xspace}
\newcommand{\reply}{\textsf{Reply}\xspace}
\newcommand{\process}{\textsf{Process}\xspace}
\newcommand{\contact}{\textsf{Converse}\xspace}
\newcommand{\cover}{\textsf{Cover}\xspace}
\newcommand{\verifycontact}{\textsf{VerifyContact}\xspace}

\subsection{\name Protocols and Design} 
\label{sub:designing_datashare}
The journalists' organization sets up the \name system by running \systemsetup
(Protocol~\ref{prot:setup}). Thereafter, journalists join \name by running
\journalistsetup (Protocol~\ref{prot:jsetup}). Journalists periodically call
\gettoken (Protocol~\ref{prot:gettoken}) to get new authentication tokens, and
\publish (Protocol~\ref{prot:publish}) to make their documents searchable. \name
does not support multiple devices, and the software running on journalists' machines
automatically handles key management without requiring human interaction. If a
journalist's key is compromised, she contacts the organization to revoke it.

\begin{protocol}[\systemsetup] \label{prot:setup}
  The journalist organization runs \systemsetup to set up the \name system:
  \begin{enumerate}[noitemsep,topsep=0pt]
    \item The organization  generates a cyclic group $\G$ of prime order 
      $\grouporder$ with generator $\gen$, and hash functions $\hash : \bin^*
      \rightarrow \bin^{\secpar}$ and $\GHash : \bin^* \rightarrow \G$ for use in
      the \mspsi protocol. It selects parameters $\cfparams$ for the cuckoo
      filter and sets the maximum number of query keywords $\kwlim$ (we use
      $\kwlim = 10$). The organization publishes these.
    \item The organization sets up a token issuer by running 
      $(\gsmsk, \gsmpk) = \bssetup(1^{\secpar})$ and publishes $\gsmpk$.
    \item The organization sets up a \DSpigeonhole, which provides a bulletin 
      board and a pigeonhole.
  \end{enumerate}
\end{protocol}

\begin{protocol}[\journalistsetup] \label{prot:jsetup} 
  Journalists run \journalistsetup to join the network: The journalist
  authenticates to the organization and registers for \name.
\end{protocol}

\begin{protocol}[\gettoken] \label{prot:gettoken}
  Journalists run \gettoken to obtain one-time-use authentication tokens from
  the organization.
  \begin{enumerate}[noitemsep,topsep=0pt]
    \item The journalist $\jour$ connects to the organization and authenticates
      herself. The organization verifies that $\jour$ is allowed to obtain an
      extra token and, if not, aborts.
    \item The journalist generates an ephemeral signing key 
      $(\jourSK_T, \jourPK_T)$; runs the $\bssign()$ protocol with the
      organization to obtain the organization's signature $\cred$ on the
      message $\jourPK_T$ (without the organization learning $\jourPK_T$); 
      and stores the token $T = (\jourSK_T, \cred)$.
  \end{enumerate}
\end{protocol}

To obtain tokens for the new epoch, journalists repeatedly run the \gettoken
protocol at the beginning of each epoch.

\begin{protocol}[\publish] \label{prot:publish}
  Journalists run \publish to make their documents searchable. \publish takes as input
  a token $T = (\jourSK_T, \cred)$ and a set $\textvar{Docs} = \{\doc_1, ..,
  \doc_{\serversize}\}$ of $\serversize$ documents such that each document
  $\doc_i$ is a set of keywords in $\{0,1\}^*$. This protocol includes the
  pre-computation phase of \mspsi.
  \begin{enumerate}[noitemsep,topsep=0pt]
    \item The journalist chooses a secret key $\serverkey \randin \Zp$ and
      computes her tag collection for the \mspsi protocol as 
      $$\tagCollection = \{\hash(i \concat \GHash(\y)^{\serverkey}) \mid 
      i \in \NatNumUpTo{\serversize},\; \y \in \doc_i \},$$
      and compresses it into a cuckoo filter
      $\cuckoo = \cfcompress(\tagCollection,\cfparams)$.
    \item The journalist generates a long-term pseudonym $\pid$, and a
      medium-term contact key pair $(\jourSK, \jourPK)$.
    \item The journalist encodes her pseudonym $\pid$, public key $\jourPK$,
      compressed tag collection $\cuckoo$, and the number of documents
      $\serversize$ as her public record
      $$\jRecord = (\pid, \jourPK,\cuckoo, \serversize).$$
    \item The journalist signs her record $\signature = \gssign(\jourSK_T,
      \jRecord)$ and runs $\broadcastSend(\jRecord \concat \signature \concat
      \jourPK_T \concat \cred)$ to publish it.
  \end{enumerate}
\end{protocol}

\name automatically rotates (e.g., every week) the medium-term contact key of
journalists $(\jourSK, \jourPK)$ to ensure forward secrecy. This prevents that
an attacker that obtains a journalist's medium-term private key can recompute
the mailbox addresses and encryption key of messages sent and received by the
compromised journalist.

Journalists retrieve all public records from the bulletin board. They run
$\gsverifysig(\jourPK_T, \signature, \jRecord)$ to verify the records against
the ephemeral signing key, check that they have not seen $\jourPK_T$ before
to enforce the one-time use, and run $\bsverify(\jourPK_T, \cred, \gsmpk)$ to
validate the blind signature. Journalists discard invalid records.

\begin{figure}[t]
    \includegraphics[width=\linewidth, trim=12mm 39mm 2mm 0mm,clip]{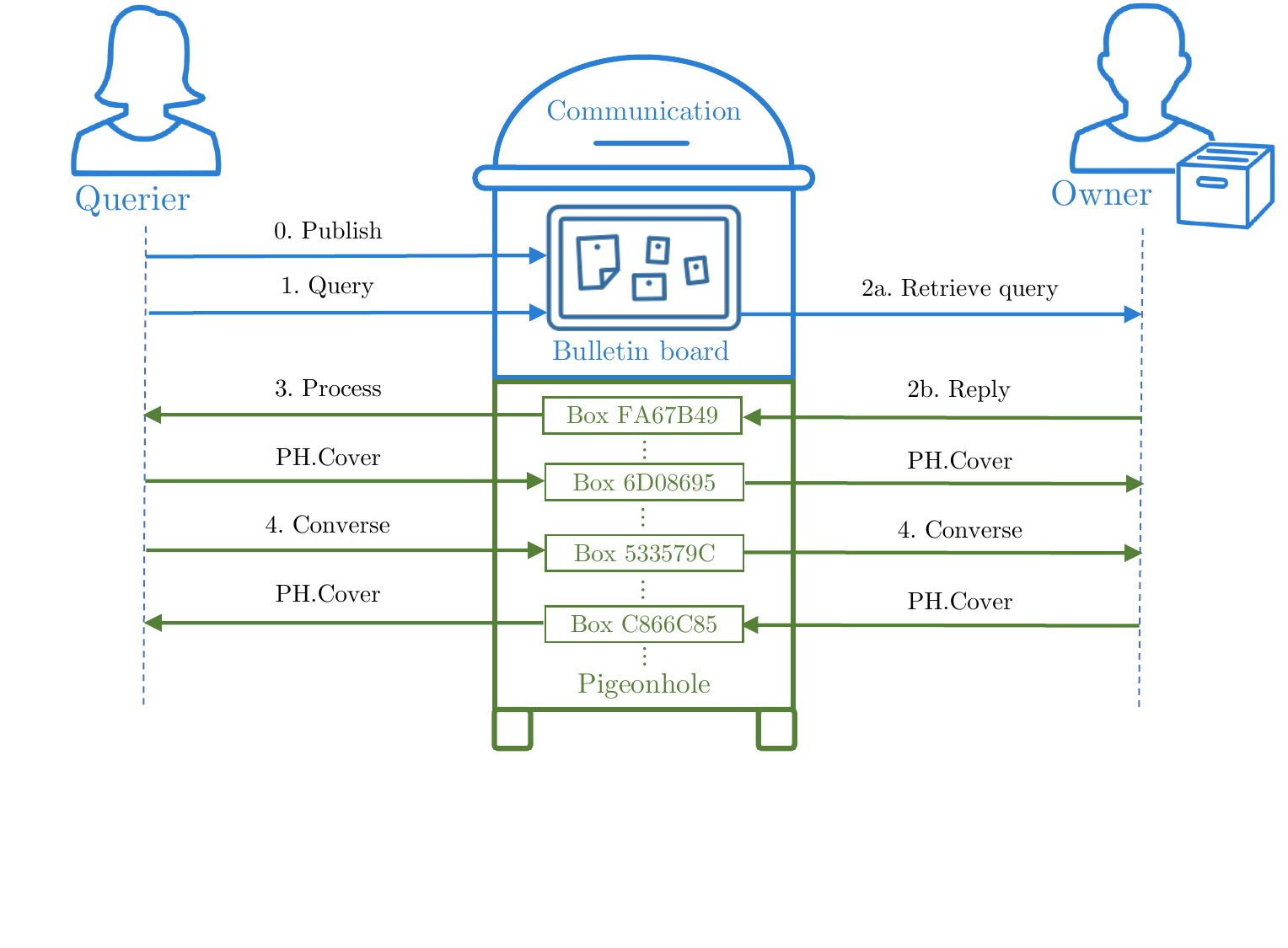}
    \caption{An overview of \DataShare protocols.}
    \label{fig:protocol_overview}
\end{figure}

\name incorporates \mspsi into its protocols to enable document search. Querying
works as follows (Fig.~\ref{fig:protocol_overview}): (1) The querier posts a
query together with a fresh key $\jourPK_q$ to the bulletin board
(Protocol~\ref{prot:query}); (2) Document owners retrieve these queries from the
bulletin board (2a), they compute the reply address, and they send the reply to
a pigeonhole mailbox (2b, see Protocol~\ref{prot:reply}); (3) The querier
monitors the reply addresses for all document owners, retrieves the replies, and
computes the intersection to determine matches (Protocol~\ref{prot:process}).

\begin{protocol}[\querycmd]\label{prot:query}
  Queriers run \querycmd to search for keywords $X$. The protocol takes as input
  a token $T = (\jourSK_T, \cred)$.
  \begin{enumerate}[noitemsep,topsep=0pt]
    \item The querier generates a key pair $(\jourSK_q, \jourPK_q)$ for the query
      and pads $X$ to $\kwlim$ keywords by adding random elements.
    \item As in the \mspsi protocol, the querier picks a fresh blinding factor 
      $\clientkey \randin \Zp$, and computes:
      \[\query = \{\GHash(\x)^\clientkey \setdelim \x \in X\}.\]
    \item The querier signs the query $\query$ and her public key $\jourPK_q$
      as $\signature = \gssign(\jourSK_T, \query \concat \jourPK_q)$, and
      broadcasts the query $\query$, public key $\jourPK_q$, signature
      $\signature$, ephemeral token key $\jourPK_T$, and token $\cred$ by running
      $\broadcastSend(\query \concat \jourPK_q \concat \signature \concat
      \jourPK_T \concat \cred)$. 
  \end{enumerate}
\end{protocol} 

Recall that \mspsi perfectly hides the keywords inside queries. As a result,
these queries can be safely broadcasted.

\newcommand{\sendmsg}{\textsf{SendMsg}\xspace}
\newcommand{\recvmsg}{\textsf{RecvMsg}\xspace}

\begin{protocol}[\reply]\label{prot:reply}
  Document owners run \reply to answer a query $(\query, \jourPK_q, 
  \signature, \jourPK_T, \cred)$ retrieved from the bulletin board.
  \begin{enumerate}[noitemsep,topsep=0pt]
  \item The owner verifies the query by checking $\gsverifysig(\jourPK_T,
    \signature, \allowbreak \query \concat\allowbreak \jourPK_q)$,
    $\bsverify(\gsmpk, \cred, \jourPK_q)$, and that she did not see $\jourPK_T$
    before. If any verification fails, she aborts. 
    \item The owner uses her secret key $\serverkey$ to compute the \mspsi
      response  $\response = \{\xp^{\serverkey} \setdelim \xp \in \query\}$ to the
      query.
    \item Let $\jourSK$ be the owner's medium-term private key. She runs
      $\phsendraw(\jourSK, \jourPK_q, \response)$ to post the result to the
      pigeonhole, and starts the process $\phrecvproc(\jourSK, \jourPK_q)$ to
      await conversation messages from the querier (see \contact below).
  \end{enumerate}
\end{protocol}

\begin{protocol}[\process]\label{prot:process}
  Queriers run the \process protocol for every journalist $\jour$ with record
  $\jRecord =(\pid, \jourPK, \cuckoo, \serversize)$ to retrieve and process
  responses to their query $(X, \jourSK_q, \clientkey),$ where $X$ is the
  unpadded set of query keywords.
  \begin{enumerate}[noitemsep,topsep=0pt]
    \item The querier runs the asynchronous protocol
    $\response \gets \phrecvproc(\jourSK_q, \jourPK)$ to get the new response.
    \item Similar to \mspsi, the querier computes the size of the intersection $I_i$
      for each document $d_i$, $1 \leq i \leq \serversize$, as 
      \[ I_i = \left| \cfintersection\left(\cuckoo, \{\hash( i \parallel
      \xpp^{\clientkey^{-1}})\setdelim \xpp \in \response\} \right) \right|.\] 
    \item Let $q = |X|$ be the number of query keywords. The querier learns that
the owner $\pid$ has $\docMatch = \left| \{i \;|\; I_i = q \} \right|$ matching
documents.
  \end{enumerate}
\end{protocol}

\newcommand{\timing}{\textvar{t}\xspace}
\newcommand{\delaydummy}{t_\textsf{dummy}\xspace}
\newcommand{\coverjour}{\textvar{D}\xspace}
\newcommand{\coverField}[1]{\field{\coverjour}{{#1}} }

After finding a match, the querier and owner can converse via the pigeonhole to
discuss the sharing of documents using the $\contact$ protocol.

\begin{protocol}[\contact] \label{prot:contact}
  Let $(\jourSK_q, \jourPK_q)$ be the query's key pair, and $(\jourSK_O,
  \jourPK_O)$ the owner's medium-term key pair at the time of sending the query.
  \begin{itemize}[noitemsep,topsep=0pt]
    \item The querier sends messages $m$ to the owner by calling
      $\phsendhidden(\jourSK_q, \jourPK_O, m)$, and awaits replies by calling
      $\phrecvproc(\jourSK_q, \jourPK_O).$
    \item The owner sends messages $m$ to the querier by calling
      $\phsendhidden(\jourSK_O, \jourPK_q, m)$, and awaits replies by calling
      $\phrecvproc(\jourSK_O, \jourPK_q)$.
    \item After receiving a message, the receiving party calls $\phrecvproc$
      again, to await further messages.
  \end{itemize}
\end{protocol}

Both the query's key $\jourPK_q$ and the owner's key $\jourPK_O$ are signed
using a one-time-use token. Thus, querier and owner know they communicate with
legitimate journalists.


\newcommand{\sendtime}[1]{\textvar{ST}_{{#1}}\xspace}
\newcommand{\receivetime}[1]{\textvar{RT}_{{#1}}\xspace}
\newcommand{\queryperiod}{\textvar{qp}\xspace}

\subsection{\name Security Analysis} 
\label{sub:leakage_analysis}
\name provides the following guarantees: 

\para{Protecting Queries}
The requirements established in Section~\ref{sec:system:requirements} state that
\name must protect the searched keywords and identity of the querier from
adversaries that control the communication server and a subset of document
owners. The \querycmd protocol, which handles sending queries, is based on
\mspsi. It represents searched keywords as the client's set in \mspsi.
Theorem~\ref{thm:mspsi-privacy} states that \mspsi perfectly hides the client's
set from malicious servers. Therefore, \name protects the content of queries
from owners.

\name does not reveal any information about the identity of queriers at the
network and application layer. Theorem~\ref{thm:anonymity} ensures that the
communication system provides sender and receiver anonymity and protects the
querier's identity at the network layer. At the application layer, the querier
sends $(\query \concat \jourPK_q \concat \signature \concat \jourPK_T \concat
\cred)$ as part of the \querycmd protocol to the bulletin board. The values
$\signature$, $\jourPK_T$, and $\cred$ form an anonymous authentication token
based on Abe's blind signature \cite{Abe01}. Anonymous tokens are independent of
the querier's identity. The value $\jourPK_q$ is an ephemeral public key, and
$\query$ is a \mspsi query which uses an ephemeral secret for the client. Hence,
both $\jourPK_q$ and $\query$ are independent of the querier's identity too.
Therefore, the content of the query does not leak the querier's identity at the
application layer.

\para{Protecting Conversations}
According to the requirements stated in Section~\ref{sec:system:requirements}, \name
must protect (1) the content, and (2) the identity of participants in a
conversation from non-participants. (3) \name must protect the
identities of journalists (who are in a conversation) from each other.

First, \name protects the content of conversation messages from
non-participants: Theorem~\ref{thm:secrecy} proves that
only the sender and receiver can read their conversation messages.

Second, \name protects the identity of participants in a conversation from
non-participants. Theorem~\ref{thm:unobservable} proves that communication is
unobservable, as long as participants are awaiting both conversation and cover
messages. \name enforces the conditions by construction. Immediately after
answering a query (see \reply, Protocol~\ref{prot:reply}), the owner starts
\phrecvproc to listen for messages from the querier. Similarly, the querier
starts to listen for conversation messages from the owner right after sending
him a conversation message (see \contact, Protocol~\ref{prot:contact}).
Moreover, the ``cover keys'' and ``receiving cover messages'' processes in the
\phcover protocol ensure that all journalists broadcast their cover keys and
start \phrecvproc after receiving a new cover key. Therefore, \name satisfies
the requirements on the communication systems in Theorem~\ref{thm:unobservable}.
As a result, non-participants cannot detect whether users communicate. Thus,
protecting the identity of participants as required.

Third, \name aims to hide the identity of journalists from their counterparts
in a conversation. Theorem~\ref{thm:anonymity} shows that the communication
system does not reveal the identity of journalists at the network layer. \name
also ensures protection at the cryptographic layer: as we argued above, queries
are unlinkable.
However, \name cannot provide unconditional protection for conversations.
Queriers or document owners could identify themselves as part of the
conversation. Moreover, by their very nature, messages in a conversation are
linkable. Also, as we discuss below, insiders can use extra information to
identify communication partners.

\para{Protecting Document Collections}
\label{para:search-privacy}
\emph{Any} functional search system inherently reveals information about the
documents that it makes available for search: To be useful it must return at
least one bit of information. An attacker can learn more information by making
additional queries. We show that \name provides comparable document owner’s
privacy to that of ideal theoretical search systems. We use as a security metric
the number of queries an attacker has to make to achieve each of the following
goals:

\parait{Document Recovery.} 
Given a target set of keywords (e.g. ``XKeyscore'' and
``Snowden''), an adversary aims to learn which of these target keywords are
contained in a document for which some keywords are already known.

\parait{Corpus Extraction.} 
Given a set of target keywords, an adversary aims to learn which documents in
a corpus contain which target keywords. If the target set contains all possible
keywords, the adversary effectively recovers the full corpus.

Any functional search system is also susceptible to confirmation attacks. An
adversary interested in knowing whether a document in a collection contains a
keyword (e.g., ``XKeyscore'' to learn whether the collection contains the
Snowden documents) can always directly query for the keyword of interest.

We compare the number of queries an adversary needs to extract the corpus or
recover a document in the following three settings: when using \name, and when
using one of two hypothetical systems. The first hypothetical system, called
\emph{1-bit}, is an ideal search system. In this system, given a query, the querier
learns \emph{only} one bit of information: whether the owner has a matching
document. The second hypothetical system, called \emph{\#doc}, is an ideal
search system where the querier learns how many matching documents the owner
has.

\begin{table}[tb]
\centering
\caption{Privacy and scalability of the hypothetical and \name's \mspsi based
  search protocols. The table shows the number of queries necessary to achieve
  document recovery and corpus extraction, when interacting with a corpus of $d$
  documents over a set $n$ keywords. The document extraction bound for the 1-bit
  system extracts up to uniqueness bound $\uniqueNum$.}
\label{tab:document-privacy}
\begin{tabular}{lcccc}
  \toprule
                  & Doc                          & Extract                      & Scale        \\
  \midrule
  1-bit           &  $n$                         & $n^u + nd$\tnote{*}          & -\hskip 1pt -\\
  \#doc           &  $n$                         & $nd$                         & -            \\
  \textbf{\name}  & $\mathbf{n/}\textbf{\kwlim}$ & $\mathbf{n/\textbf{\kwlim}}$ & \textbf{+}   \\
  \bottomrule
\end{tabular}
\end{table}

Table~\ref{tab:document-privacy} compares these hypothetical systems with
\name's use of \mspsi, where $d$ is the number of documents and $n$ the number
of relevant keywords. 
We show that extracting all the keywords from a document requires at most $n$
queries in the 1-bit and \#docs search systems \fullversioncmd{in
Appendices~\ref{appendix:sub:one_bit_search} and
~\ref{appendix:sub:numdocsearch_search_extraction}.}
\confversioncmd{in the extended version~\cite{datashareextended}
(Appendices B.1 and B.2).}

Extracting the full corpus using the 1-bit search system is not always
possible. Let the \emph{uniqueness number} $\uniqueNum_D$ be the smallest
number of keywords that uniquely identify a document $D$. If $D$ is a strict
subset of another document $D'$, the document cannot be uniquely identified, and
we set $\uniqueNum_D = \infty$. However, as corpora are small, we expect that
most documents can be identified by a few well-chosen keywords, resulting in
small uniqueness numbers.

\fullversioncmd{In Appendix~\ref{appendix:sub:one_bit_search},}
\confversioncmd{In Appendix~B.1 of the extended version~\cite{datashareextended},}
we show that extracting all documents with uniqueness number less or equal to
$\uniqueNum$ takes $O(n^\uniqueNum + nd)$ queries in the 1-bit search system. 
\fullversioncmd{In Appendix~\ref{appendix:sub:numdocsearch_search_extraction}}
\confversioncmd{In Appendix~B.2 of the extended version~\cite{datashareextended},}
we show that extracting all documents (regardless of uniqueness number) takes
$O(nd)$ queries in the \#doc search system.

In \name, we limit \mspsi queries to $\kwlim$ keywords per query. Hence, any
document extraction attack must make at least $n / \kwlim$ queries to ensure all
keywords are queried at least once. In fact, this bound is tight for both
document recovery and corpus extraction for \mspsi: By making $n / \kwlim$
queries with $\kwlim$ keywords each, the attacker learns which keywords are
contained in which documents.

In summary, \name offers similar protection against corpus extraction as the
\numdocsearch ideal system. For document recovery, not even the ideal
1-bit-search system offers much better protection. At the same time, \mspsi is
much more efficient than their ideal counterparts.\\

\para{Internal Adversaries}
\label{para:aux_privacy}
We now discuss how an adversary may use auxiliary information about a
journalist's behavior or corpus to gain an advantage in identifying the
journalist. Some of these attacks are inherent to all systems that provide
search or messaging capabilities. These attacks, however, \emph{do not permit}
the adversary to extract additional information from journalists' corpora. 

\parait{Intersection Attacks.} 
A malicious sender (respectively, receiver) who has access to the online/offline
status of journalists can use this information to reduce the anonymity set of
the receiver (respectively, sender) to only those users that are online. As more
messages are exchanged, this anonymity set becomes unavoidably
smaller~\cite{KedoganAP02}. This attack is inherent to all low-delay
asynchronous messaging systems, including the one provided by the communication
server. In the context of \name, we note that once document owners and queriers
are having a conversation, it is likely that they reveal their identity to each
other. Yet, we stress that preserving anonymity and, in general, that minimizing
the digital traces left by the journalists in the system is very important to
reducing the risk that journalists become profitable targets for subpoenas or
hacking attempts. 

\parait{Stylometry.}
A malicious receiver can use stylometry, i.e., linguistic style, to guess the
identity of the sender of a message. The effectiveness of this attack depends on
the volume of conversation~\cite{lopez13,muthuselvi16}. This attack is inherent
to all messaging systems, as revealing the content of the messages is required
to provide utility.

\parait{Partial Knowledge of Corpus.}
Adversaries who have prior knowledge about a journalist's corpus can use this
knowledge to identify this journalist in the system. However, due to \mspsi's
privacy property (see Theorem~\ref{thm:mspsi-privacy}), learning \emph{more}
about the documents in this journalist's corpus requires making search queries.

In particular, if an adversary convinces a journalist to add a document with a
unique keyword pattern to his corpus, then the adversary can detect this
journalist's corpus by searching for the pattern. \name cannot prevent such
out-of-band watermarking. However, the adversary still needs to make further
queries to learn anything about non-watermarked documents in the collection.

\para{Non-goals} 
Finally, we discuss security properties that are not required in \name.

\parait{Query Unlinkability.} 
\label{par:query_unlinkability}
\name does not necessarily hide which queries are made by the same querier. Even
though anonymity is ensured at the network and application layers, queriers that
have made multiple queries may retrieve responses for all these queries in quick
succession after coming online. Document owners know the corresponding query of
their messages, and if they collude with the \DSpigeonhole, then they can infer
that the same person made these queries. As no adversary can learn any
information about the queries themselves, we consider this leakage to be
irrelevant.

\newcommand{\ringkw}{\textvar{link}}
\parait{Owner Unlinkability.} 
\label{par:response_unlinkability}
\name also reveals \emph{which} pseudonymous document owner created a \mspsi
response, making responses linkable. \name cannot provide unlinkability for
document owners when using \mspsi. Although \mspsi itself could be modified to
work without knowing the document owner's pseudonym, an adversary could simply
repeat a specific rare keyword (for example, ``one-word-to-link-them-all'') and
identify the document owners based on the corresponding pretag that they produce
for the rare keyword. We believe that revealing the document owner's pseudonym
is an acceptable leakage for the performance gain it provides.

\subsection{Cost Evaluation} 
\label{sec:evaluation}

\begin{figure*}[tbp]
  \centering
  \includegraphics[width=0.32\textwidth]{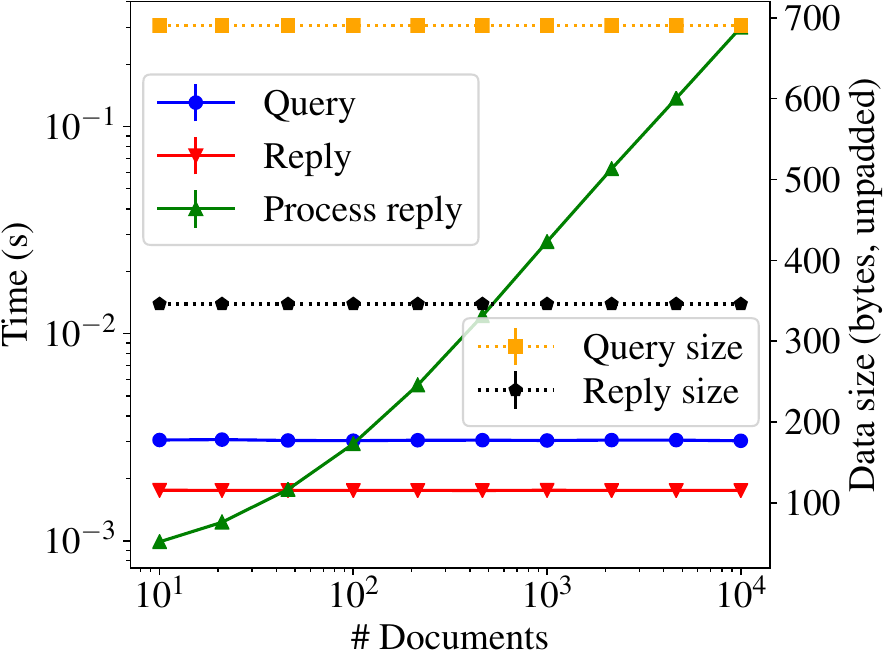}
  \hfill
  \includegraphics[width=0.32\textwidth]{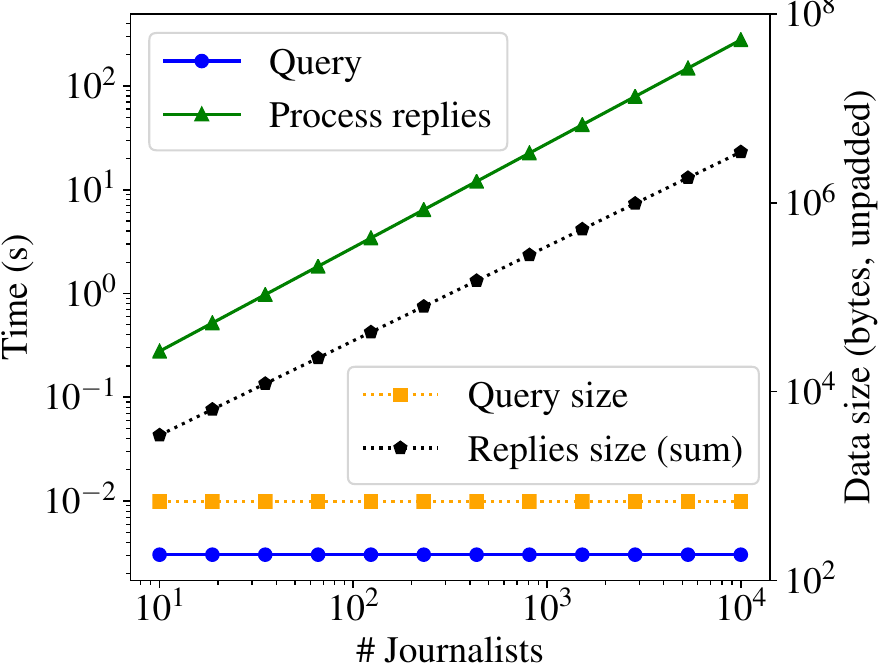}
  \hfill
  \includegraphics[width=0.32\textwidth]{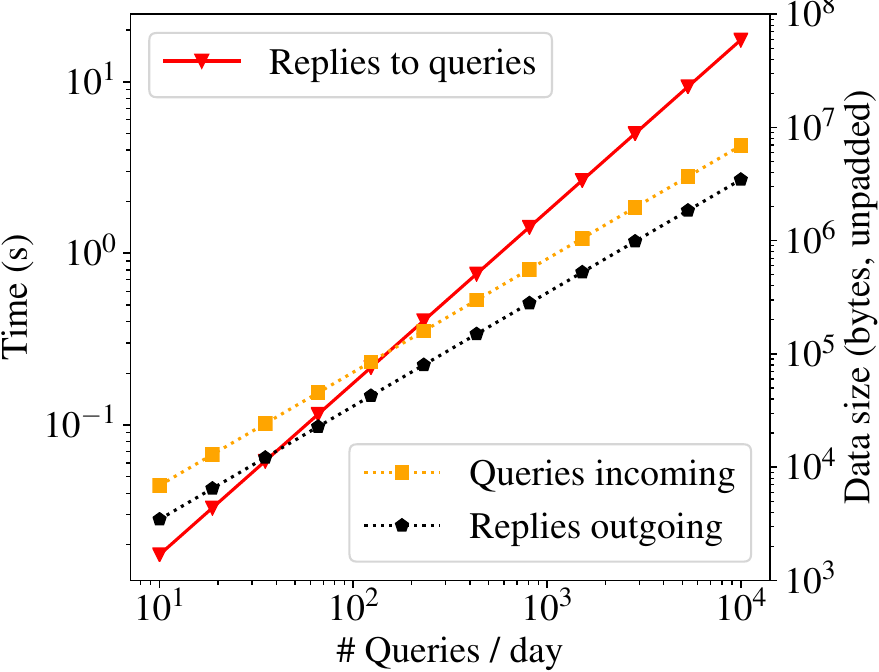}
  \caption{Time (left axis) and bandwidth (right axis, unpadded) for single
   query on one journalist (left), single query on all journalists (center),
   answering several queries (right).}
  \label{fig:perf:all}
\end{figure*}

At the time of writing, ICIJ has implemented the local
search and indexing component of \name~\cite{DatasharePlatform}. In addition, we have 
implemented a Python prototype of the cryptographic building blocks underlying
search (Section~\ref{sec:multi_set_psi}) and authentication
(Section~\ref{sub:preliminaries}).\footnote{The code is open source and
  available at:
\url{https://github.com/spring-epfl/datashare-network-crypto}} We did not
implement the messaging service (Section~\ref{sec:messaging}), as it relies on
standard building blocks and cryptographic operations.

To agree on the final configuration of the system, we are currently running a
user study among the organization members. The goal is to familiarize
journalists with a type of search and messaging system that is different than
those they typically use in their daily activities (Google and email or instant
messaging, respectively), as well as with the threat model within which \name
provides protection. We recall that \name hides all key management and
cryptography from the users, hence we do not study those aspects.

In this section, we evaluate the performance of the cryptographic operations
involved in search and authentication. Our prototype uses the
\textsf{petlib}~\cite{Petlib} binding to OpenSSL on the fast NIST P-256 curve
for the elliptic curve cryptography in \mspsi. We implement the Cuckoo filter
using \textsf{cuckoopy}~\cite{Cuckoopy}. We ran all experiments on an Intel
i3-8100 processor running at 3.60GHz using a single core. We note that
operations could be easily parallelized to improve performance.

We focus our evaluation on the computational cost and bandwidth cost of the
authentication and search primitives to ensure that \name fulfills the
requirements in Section~\ref{sec:system:requirements} without journalists
needing fast hardware or fast connections. When reporting bandwidth cost, we omit
the overhead of the meta-protocol that carries messages between system parties.
We do not consider any one-time setup cost or the standard cryptography used
for messaging. We also do not measure network delay as the latency the Tor
network introduces -- around one second~\cite{TorPerf} -- is negligible compared
to the waiting time imposed by connection asynchrony; and it is orders of
magnitude less than the journalists waiting limits (see
Section~\ref{sec:system:requirements}). 

We provide performance measurements for different system work loads. We consider
the \emph{base scenario} to be 1000 journalists, each of whom makes 1000
documents available for search. There is no requirement for the number of
keywords per document or keywords per query. For a \emph{conservative} estimate,
we assume that each document contains 100 keywords, and that each query contains
10 keywords.

\para{Authenticating Journalists} 
We implement the \bs scheme using Abe's blind signatures~\cite{Abe01}. Running
\bssign requires transferring 413 bytes and takes 0.32\,ms and 0.62\,ms,
respectively, for the organization and the journalist. Each blind signature is
360 bytes, and verifying it using \bsverify takes 0.4\,ms. We include these
costs in the respective protocols.

\para{Publishing Documents} 
Data owners run \publish to make their documents searchable. For the base
scenario, this \emph{one-time} operation takes 14\,seconds and results in a
cuckoo filter of size 400\,KB for a FPR of 0.004\%. For a conservative
estimation, we assume all keywords are different. When documents contain
duplicate elements $y$, the precomputation can be amortized: the pretag
$\GHash(y)^{\serverkey}$ has to be computed only once. 

\para{Querying a Single Journalist}
Figure~\ref{fig:perf:all}, left, shows the time and bandwidth required to issue
one query on one collection, depending on the collection size. The querier
constructs the query using \querycmd and sends it to the document owner (the
querier's computation cost includes the cost of obtaining the one-time-use token
using \gettoken). The document owner responds using \reply. These operations are
independent of the number of documents. The querier runs $\process$ to retrieve
the responses, and to compute the intersection of query and collection. This
takes 27\,ms in the base scenario. Bandwidth cost reflects the raw content size.
But recall that, in practice, the messaging system pads messages to 1\,KB.

\para{Querying All Journalists} 
As expected, the processing time and bandwidth of \querycmd are independent of
the population size, whereas the cost of processing the responses grows linearly
with the number of queried journalists (Figure~\ref{fig:perf:all}, center). For
the baseline scenario, processing all 999 responses takes about 27 seconds
\emph{in total} and requires retrieving 1\,MB of padded responses. We note that
this cost is only paid by the querier, and does not impact the document owners
(see below). Moreover, as replies are unlikely to arrive all at once,
processing can be spread out over time; thus reducing the burden on the querier's
machine.

This computation assumes that each journalist has the same number of documents.
In practice, this might not hold. However, as we see in Figure~\ref{fig:perf:all},
left, as soon as collections have more than 50 documents the computation time
grows linearly with the collection size. Hence, as long as journalists have
collections with at least 50 documents, the measurements in
Figure~\ref{fig:perf:all}, center, are largely independent of how these
documents are distributed among journalists.

\begin{figure}[tbp]
  \centering
  \includegraphics[width=0.7\columnwidth]{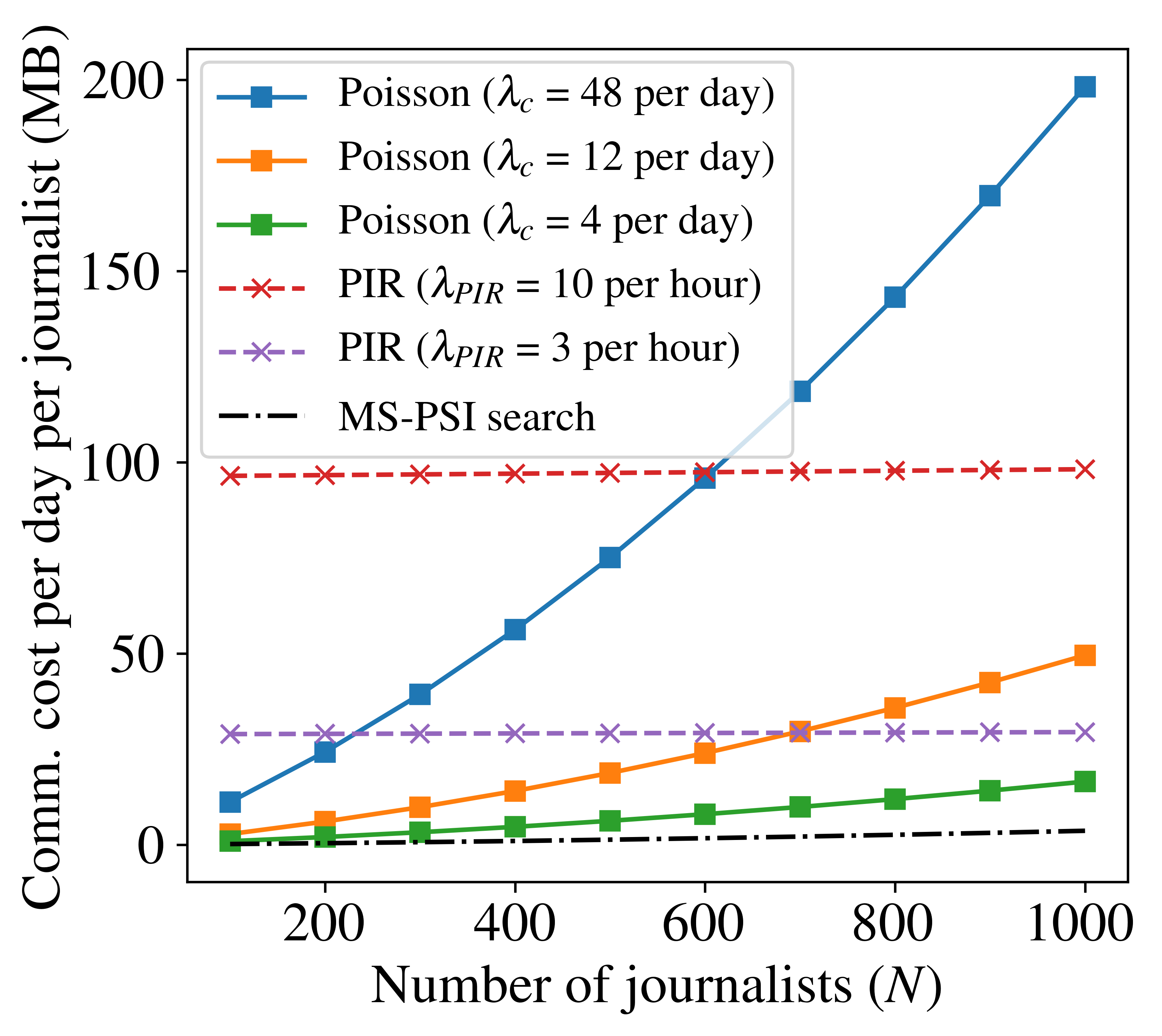} 
  \caption{Communication cost for different communication
  strategies, depending on the number of journalists. We assume 1 query per
  journalist per day in the search component.}
  \label{fig:overall}
\end{figure}

\para{The Cost for Document Owners}
Document owners spend time and bandwidth to answer queries from other
journalists. Figure~\ref{fig:perf:all}, right, shows how these costs depend on
the total number of queries an owner receives per day. Even when all journalists
make 10 queries of 10 keywords each day (unlikely in practice) the \emph{total}
computation time for document owners is less than 20 seconds; and they send and
receive less than 7 megabytes (10\,MB when padded).

\para{Overall Cost of \name}
Finally, we plot in Figure~\ref{fig:overall} the \emph{total} bandwidth a
journalist needs per day to run \name, depending on the number of journalists in
the system and the strategy implemented by the communication system. Regardless
of the size of the system, the cost associated to hide communications dominates
the cost stemming from searches. Regarding the communication cost, as explained
in Section~\ref{sec:mess_eval}, for small organizations Poisson-rate cover
traffic provides a better trade-off with respect to throughput, but as more
journalists join the system, the PIR-based system starts performing better.

\section{Related Work} 
\label{sec:related_work}

Many PSI protocols~\cite{CristofaroT10, NaorR04, HazayL10,KissL0AP17} differ
from that of De Cristofaro et al.~\cite{CristofaroGT12}, but only in how they
instantiate the oblivious pseudorandom functions (OPRFs). Our \mspsi protocols
can easily be adjusted to use alternative OPRFs to compute the pretags. As 
bandwidth is at a premium in our scenario, we base our \mspsi protocols
on the scheme of De Cristofaro et al. as it has the lowest
communication cost.

The restrictions on computational power and bandwidth rule out many other PSI
schemes. Protocols based on oblivious polynomial evaluation~\cite{FreedmanNP04}
have very high computational cost. Hash-based PSI
protocols~\cite{Pinkas0Z14,PinkasSZ18, Pinkas0WW18} have low computational cost,
but require much communication. Finally, PSI protocols can be built from generic
secure multi-party computation
directly~\cite{HuangEK12,Pinkas0WW18,Pinkas0SZ15}. However, this approach also
suffers from a high communication cost and requires more than one communication
round.

Secure multi-party computation based PSI protocols can be extended to provide
better privacy than \mspsi: The underlying circuits can be extended to implement
either the ideal 1-bit search or the \numdocsearch search system. However, their
high communication and round complexity rule out their use in our document
search system. Recently, Zhao and Chow proposed a threshold PSI protocol based
on polynomial evaluation~\cite{Zhao1C18} that can implement the \numdocsearch
search system (by setting the threshold equal to the number of keywords). But
its communication and computation complexity rule it out.

A document search engine could also be implemented using private information
retrieval (PIR): Queriers use PIR to privately query keywords in the document
owner's database. Computational PIR
protocols\cite{KushilevitzO97,AngelCLS18,MelchorBFK16} (IT-PIR
protocols~\cite{BeimelI01,Goldberg07} do not apply) place a high computational
burden on the database owner. More importantly, PIR requires a fixed set of keywords,
that cannot exist for the journalists' use case. Keyword-based PIR
approaches~\cite{AngelS16,ChorGN97} sidestep this issue, but instead require
multiple communication rounds. Therefore, PIR cannot be used in our scenario.

Encrypted databases hide the queries of data owners from an untrusted database
server~\cite{SongWP00,PopaRZB11,PappasKVKMCGKB14,EtemadKPE18}. Although
\DataShare could operate such a central encrypted database, this system would
not be secure. On the one hand, if the encrypted database is used as a central
service for all collections, then a collusion between a journalist and the
database server would leak the entire database. This would violate document
privacy. On the other hand, if each journalist operates a personal database,
then collusion between the database server and the document owner (acting as the
`data owner' in the terminology used in the encrypted database literature) might
leak search queries, as these systems are not designed to hide queries from a
database server that colludes with the data owner. This would violate query
privacy.

\section{Future Steps: Better Protection} 
\label{sec:conclusions}
We have introduced \name, a decentralized privacy-preserving search engine that
enables journalists to find and request information held by their peers. \name
has great potential to help journalists collaborate in uncovering cross-border
crimes, corruption, or abuse of power.

Our collaboration with a large organization of investigative journalists (ICIJ)
provided us with a novel set of requirements that, despite being deeply grounded
in practicality, are rarely considered in academic publications. These
requirements led us to design new building blocks that we optimized for
security trade-offs different than previous work. We combined these building
blocks into an efficient and low-risk decentralized search system.  

Yet, \name's protections are not perfect. Both the search primitive, and the
availability of timestamps of actions in the system, leak information. At the
time of writing, the high cost in bandwidth and/or computation of
state-of-the-art techniques that could prevent this leakage -- e.g.,  PIR to
hide access patterns and efficient garbled circuits to implement one-bit search
-- precludes their deployment.

We hope that this paper fosters new research that addresses these problems. We
believe that the new set of requirements opens an interesting new design space
with much potential to produce results that have a high impact, not only by
helping investigative journalism to support democratic societies, but also in
other domains.

\bibliographystyle{plain}
\bibliography{datashare-mspsi}

\appendix

\newcommand{\view}{\textvar{View}}
\newcommand{\ideal}{\textvar{Ideal}}
\newcommand{\real}{\textvar{Real}}
\newcommand{\prooforacle}[2]{\mathcal{O}_{#1}^{#2}}
\renewcommand{\adv}{\mathcal{A}}
\renewcommand{\sim}{\mathcal{S}}
\newcommand{\advb}{\mathcal{B}}
\newcommand{\tcsize}{\mathcal{N}}
\newcommand{\psiadaptive}{\textvar{PSI}_{\textvar{adt}}}

\section{Security of \mspsi}
\label{sec:formal_proof}
In this section, we prove that \mspsi is correct and private. Proving privacy
requires showing that neither a malicious client nor a malicious server can
learn anything beyond the intended output of the protocol. The client's
interaction with the server is identical to the \psi ~\cite{CristofaroGT12} and
\cspsi ~\cite{KissL0AP17} protocols. Hence, they have the same client privacy
against a malicious server. To prove server privacy, we use the ideal/real world
paradigm in the random oracle model and show that a malicious \mspsi client does
not learn anything beyond the intended output of the protocol as long as the
One-more-Gap-DH assumption holds. We first prove correctness.

\begin{theorem}
  \label{thm:mspsi-correctness}
  The \mspsi protocol is correct.
\end{theorem}
\begin{proof}  We show that the intersection $I_d$ of the $d$'th set represented as $Y_d =
  \{y_{d,1}, y_{d,2}, .., y_{d,n_d}\}$ and the client's set $X = \{x_1, x_2, ..,
  x_m\}$ is equal to $I_d = X \cap Y_d$.

  Recall that the client computes the intersection with set $Y_d$ as \(I_d =
  \{x_i \mid \clientTags^{(d)}_i \in \tagCollection \}\). We prove that \(
  \clientTags^{(d)}_i \in \tagCollection\) iff \(x_i \in X \cap Y_d\).
  For each client keyword $x_i$, the client computes the pretag
  \( 
    \clientPreTag_i = \xpp_i^{\clientkey^{-1}} =
    \xp_i^{\serverkey \clientkey^{-1}} = \GHash(\x_i)^{\clientkey \serverkey \clientkey^{-1}} =
    \GHash(\x_i)^{\serverkey} 
  \). 
  On the other hand, the server computes its pretags for set $Y_i$ as 
  \(\pretagset{i} = \{\GHash(\y)^{\serverkey} \mid y \in Y_i\}\)
  and computes its tag collection as 
  \(
     \tagCollection = \left\{\hash(i \concat t) \,\middle|\,  i \in \NatNumUpTo{\serversize} \land t \in \pretagset{i}\right\}
                    = \left\{\hash\left(i \concat \GHash(\y)^{\serverkey} \right)  \,\middle|\, 
                              i \in \NatNumUpTo{\serversize} \land y \in Y_i\right\}
  \).
  Hence, the intersection will be computed as
  \begin{align*}
    I_d &= \{ x_i \mid \clientTags^{(d)}_i \in \tagCollection \} \\
        &= \bigg\{x_i \,\bigg|\, \hash(d \concat \clientPreTag_i) \in
              \Big\{\hash\big(i \concat \GHash(\y)^{\serverkey} \big) \,\Big|\, 
              i \in \NatNumUpTo{\serversize} \land y \in Y_i\Big\} \bigg\}\text{.}
  \end{align*}
  The hash functions $\GHash$ and $\hash$ are cryptographically secure, and
  the probability of collision is negligible. Hence, two hash values will only be
  equal when their inputs are equal. Since $d$ is an input to $\hash$, only the
  keywords from the $d$'th set in the server's tag collection can be in the
  intersection. Therefore:
  \[I_d = \left\{x_i \,\middle|\, \GHash(\x_i)^{\serverkey} \in
            \{\GHash(\y)^{\serverkey} \mid y \in Y_d\} \right\} \text{.}\]
            
 Similarly, $x_i$ is an input to $\GHash$ and it will be in the
 intersection $I_d$ if $x_i$ is present in both $X$ and $Y_d$ sets. Consequently:
  \[I_d = \{x_i \mid \x_i \in \{\y \mid y \in Y_d\} \} 
        = \{x_i \mid \x_i \in Y_d \} = X \cap Y_d \text{.} \qedhere \]
\end{proof}

The \mspsi protocol is interactive: the client asks keywords in multiple queries
and receives the response of the $i$'th query before making the $i+1$'th one. To
measure the client's interaction with the server, we define $q$ as the number of
queried keywords. We chose the number of queried keywords over the number of
queries since the server reveals the same information about these $q$ keywords
regardless of how many queries they were asked in. Without loss of generality,
we assume an adaptive adversary in which the adversary asks her keywords one by
one and receives responses immediately. The non-adaptive versions or versions
where the client queries multiple keywords simultaneously only delay when the
adversary receives the response. Hence, they have the same security guarantee as
the adaptive version.

\parab{An adaptive PSI functionality.}
Let $\lambda$ be an empty string, $w$ be the client's input keyword, and
\(\mathcal{Y} = [Y_1, \ldots, Y_n]\) be a list of $n$ server sets \(Y_i =
\{y_{i,1}, \ldots, y_{i, n_i}\}\). We define the adaptive \psi functionality
$\psiadaptive$ as a two party function in which the client learns the sets which
contain the keyword $w$, and the server learns nothing:
\[\psiadaptive(w, \mathcal{Y}) = \big(\{i \mid i \in \NatNumUpTo{n} \land w \in
Y_i)\}, \lambda\big)\text{.}\]

We define $\ideal_q$ as an ideal instantiation of $\psiadaptive$ in which a
trusted third party receives the server's input and responds to the client's
$\psiadaptive$ queries at most $q$ times. $\ideal_q$ provides an oracle
\(\prooforacle{\ideal}{}(\mathcal{Y}, w) \rightarrow \{i \mid i \in
\NatNumUpTo{n} \land w \in Y_i)\} \)  which responds to ideal queries.
The ideal oracle can answer non-adaptive queries with $t$ keywords by calling
the $\psiadaptive$ process $t$ times and concatenating their responses. Note
that this operation costs $t$ adaptive queries.

We define $\real_q$ as the real world instantiation of $\psiadaptive$ which runs
the \mspsi protocol and allows the client to ask up to $q$ keywords.
The \mspsi protocol consists of 2 parts: 1) publish, which corresponds to the
pre-process phase and 2) exponentiation, which corresponds to the online
interaction.
The simulation implicitly assumes a known fixed size for the parties' inputs as
an adversary can distinguish different input sizes. To make this explicit, we
reveal the number of server sets $N$ and the size of the tag collection $\tcsize
= |\tagCollection| = \sum_{i=1}^{N}n_i$ to the simulator and the adversary. Bear
in mind that the \mspsi protocol reveals an upper bound on $N$ and $\tcsize$. 
\mspsi uses two hash-functions $\hash: \bin^* \rightarrow \bin^l$ and $\GHash:
\bin^* \rightarrow \G$, which are modeled in the random oracle model (ROM) as
oracles $\prooforacle{H_{kw}}{}$ and $\prooforacle{H_{G}}{}$ respectively. We
define the following oracles to represent $\real_q$:

\begin{description}
  \item[$x \leftarrow \prooforacle{H_{G}}{\real}(w)$] hashes the keyword $w \in
    \bin^*$ into a uniformly random group element $x \in_{R} \G$.
    \item[$TC \leftarrow \prooforacle{Pub}{\real}()$] pre-processes the server's
    input, i.e., chooses the server's secret key $\alpha$, and publishes the
    server's tag collection 
    \(\tagCollection = \{\hash(i \concat \GHash (y)^\alpha ) \mid 
    i \in \NatNumUpTo{\serversize} \land y \in Y_{i}\}\).
    \item[$x^{\alpha} \leftarrow \prooforacle{exp}{\real}(x)$] takes a group
    element $x \in \G$ and returns $x^\alpha$. The
    adversary is limited to making up to $q$ queries.
    \item[$\tau \leftarrow \prooforacle{H_{kw}}{\real}(\omega)$] hashes the
      input $\omega \in \bin^*$ to a random $l$-bit tag $\tau \in \bin^l$.
\end{description}

To show that $\ideal_q$ and $\real_q$ have the same server privacy guarantee, we
assume a PPT adversary $\adv$ that interacts with $\real_q$ and design a
simulator $\sim$ which given black-box access to $\adv$ extracts the same
information from the ideal world $\ideal_q$. In Theorem~\ref{thm:secrecy}, we
proved that the \mspsi is correct. Since \mspsi is correct and its output is
deterministic, we only have to prove the following computational
indistinguishability to show that the server privacy in the real and ideal
worlds are equivalent:
\[\view_{\real_q}^{\adv}\big([w_i]_{i \in \NatNumUpTo{q}}, \mathcal{Y}\big)
\overset{c}{\equiv} 
\view_{\ideal_q}^{\sim^{\adv, \prooforacle{\ideal}{}(\mathcal{Y}, \cdot)}}
\big([w_i]_{i \in \NatNumUpTo{q}}, \mathcal{Y}\big)\text{.}\]
Where $\view_{f}^{P}(X, Y)$ is the view of $P$ in an execution of $\psiadaptive$
instantiated with $f$, $X$ is the client's input, and $Y$ is the server's input.

We start with a high-level overview of the proof. We build simulator $\sim$ by
constructing oracles to represent \mspsi. Afterward, we show that the adversary
cannot distinguish simulator $\sim$ from $\real_q$.
Oracles $\prooforacle{H_G}{\sim}$ and $\prooforacle{exp}{\sim}$ are similar to
their real world counterparts. The oracle $\prooforacle{Pub}{\real}$ follows the
pre-process phase of the \mspsi protocol to compute the server's tag collection
$\tagCollection$ and produces a set of $\tcsize$ random $l$-bit tags generated
by the hash function $\hash$. To mimic this, the oracle
$\prooforacle{Pub}{\sim}$ returns $\tcsize$ random $l$-bit tags. To construct
the oracle $\prooforacle{H_{kw}}{\sim}$, simulator $\sim$ has to extract the
adversary's effective input and query it to the ideal oracle
$\prooforacle{\ideal}{}$ to respond accordingly (i.e. with one of the $\tcsize$
random outputs of $\prooforacle{Pub}{\sim}$ for positive and a uniformly random
tag for a negative response). The key idea in building this oracle is that the
simulator uses the server's secret $\alpha$ to decrypt queries $\omega = d ||
x^\alpha$ to the oracle $\prooforacle{H_{kw}}{\sim}$ and extract the input $x$.
After extracting $\adv$'s input, $\sim$ queries the ideal oracle and responds
accordingly. If $\sim$ makes more than $q$ queries from
$\prooforacle{\ideal}{}$, the simulation fails, and the simulator aborts. 

First, we prove in Lemma~\ref{lm:ind-real-sim} that as long as the simulator
does not abort, $\sim$ is indistinguishable from $\real_q$. Second, we show in
Lemma~\ref{lm:neg-abort} that the probability of abort is negligible. The
simulator aborts when an adversary queries ${q+1}$ distinct keywords from
$\prooforacle{H_{kw}}{\sim}$ without querying $\prooforacle{exp}{\sim}$ more
than $q$ times. Informally, this means that the adversary can compute the
exponentiation $r^\alpha_i$ of $q+1$ random values $r_i$ with only $q$ queries
to the exponentiation oracle, which translates into the One-more-Gap-DH problem.
We show that if an adversary $\adv$ exists which has a non-negligible chance of
forcing an abort, we can build an adversary $\advb$ that can break the
One-more-Gap-DH assumption given black-box access to $\adv$.

We now give details on how to build simulator $\sim$:
\begin{description}[topsep=4pt]
  \item[$x \leftarrow \prooforacle{H_{G}}{\sim}(w)$] responds the same way as
    $\prooforacle{H_{G}}{\real}$; stores the mapping between each keyword and
    its matching element.
  \item[$TC \leftarrow \prooforacle{Pub}{\sim}()$] choses a random key $\alpha$
    and generates the tag collection $\tagCollection$ as a set of $\tcsize$
    uniformly random $l$-bit tags. Recall that the simulator receives $N$ and
    $\tcsize$ as input.
  \item[$x^{\alpha} \leftarrow \prooforacle{exp}{\sim}(x)$] same as
    $\prooforacle{exp}{\real}$. The adversary is limited to making up to $q$
    queries.
  \item[$\tau \leftarrow \prooforacle{H_{kw}}{\sim}(\omega)$] the oracle
    responds to repeated queries consistently. For a new query $\omega$, it
    proceeds as follows:
    \begin{enumerate}
      \item Parse the input $\omega$ as ``$d || z$'' where $d \in \Zn$ and $z
        \in \G$. If this fails, respond with a random $l$-bit tag $\tau$.
      \item Use the secret key $\alpha$ to compute the adversary's effective
        input element $x = z^{\alpha^{-1}}$.
      \item If $x$ is the result of a query to the $\prooforacle{H_{G}}{\sim}$,
        let $w$ be the corresponding preimage. Otherwise, return a random
        $l$-bit tag $\tau$.
      \item If $w$ has not been queried, query $w$ from the ideal oracle
        $\prooforacle{\ideal}{}(\mathcal{Y}, \cdot)$ and store the response.
      \item If simulator $\sim$ has queried the ideal oracle more than $q$
        times, then abort.
      \item Respond with a random unused tag $\tau \in TC$ if $d \in
        \prooforacle{\ideal}{}(\mathcal{Y}, w)$. Otherwise, respond with a
        random $l$-bit tag $\tau$.
    \end{enumerate} 
\end{description}

\begin{lemma}
  \label{lm:ind-real-sim}
  The simulator $\sim$ is indistinguishable from the $\real_q$ as long as
  $\sim$ does not abort.
\end{lemma}

\begin{proof}
  Oracles $\prooforacle{exp}{\real}$ and $\prooforacle{exp}{\sim}$ are
  identical, and it is easy to see that $\prooforacle{H_{G}}{\sim}$ and
  $\prooforacle{Pub}{\sim}$ are indistinguishable from their $\real$
  counterparts as their output is uniformly random.

  The \mspsi protocol uses $\prooforacle{H_{kw}}{\real}$ to produce $l$-bit
  tags. Theorem~\ref{thm:mspsi-correctness} proves the correctness of the \mspsi
  protocol and shows that a final tag $\clientTags^{(d)} \leftarrow
  \prooforacle{H_{kw}}{\real}(\omega)$ is in the server's tag collection
  $\tagCollection$ if \(\omega = \text{``}d \concat \GHash(w)^s\text{''}\) and
  $w \in Y_d$ where $s$ is the server's secret key. Otherwise,
  $\clientTags^{(d)}$ is a random tag.
  Similarly, the oracle $\prooforacle{H_{kw}}{\sim}(\omega)$ responds with a tag
  \(\tau \in \tagCollection \) if \(\omega =  \text{``}d \concat
  \prooforacle{H_G}{\sim}(w)^\alpha  \text{''}\) and \(d \in
  \prooforacle{\ideal}{}(\mathcal{Y}, w)\) where $\alpha$ is the simulator's
  secret key. Otherwise, the oracle responds with a random $l$-bit tag. As long
  as the oracle $\prooforacle{H_{kw}}{\sim}$ correctly detects the effective
  input $(w, d)$ and its status $w \in Y_d$, the adversary cannot distinguish
  the oracles $\prooforacle{H_{kw}}{\sim}$ and $\prooforacle{H_{kw}}{\real}$.
  There are two possible cases for an incorrect response: false positives and
  false negatives. Now, we show that the probability of incorrect response is
  negligible.
  Let $q_G$ and $q_{kw}$ be the number of queries to oracles
  $\prooforacle{H_{G}}{\sim}$ and $\prooforacle{H_{kw}}{\sim}$ respectively. A
  false positive happens when there is a collision between the adversary's input
  $z = x^\alpha$ to the oracle $\prooforacle{H_{kw}}{\sim}$ and an unintended
  keyword queried from $x \leftarrow \prooforacle{H_{G}}{\sim}(w)$. This event
  has a probability of $q_G \cdot q_{kw} / \text{Ord}(\G)$ due to the randomness
  of $\prooforacle{H_{G}}{\sim}$. A false negative happens when $x \leftarrow
  \prooforacle{H_{G}}{\sim}(w)$ is not known at the time of the
  $\prooforacle{H_{kw}}{\sim}$ query. The probability of
  $\prooforacle{H_{G}}{\sim}(w)$ matching one of previous
  $\prooforacle{H_{kw}}{\sim}(\omega)$ queries is $q_{kw} / \text{Ord}(\G)$,
  which limits the false negative probability to $q_G \cdot q_{kw} /
  \text{Ord}(\G)$.
\end{proof}

The simulator $\sim$ aborts when adversary $\adv$ queries $q+1$ distinct
keywords from $\prooforacle{H_{kw}}{\sim}$ while making at most $q$ queries from
$\prooforacle{exp}{\sim}$. We assume that $\adv$ triggers an abort with the
probability of $\epsilon$. We state the One-more-Gap-DH assumption and relate it
to $\epsilon$.

\parab{The One-more-Gap-DH Assumption} informally states that computing CDH is
hard even if the adversary has access to a CDH oracle and the DDH problem is
easy.

The adversary $\adv$ in the One-more-Gap-DH assumption gets access to a CDH
oracle $x^\alpha \leftarrow \prooforacle{CDH}{}(x)$ with the secret $\alpha$ and
a $DL_\alpha$ oracle $1/0 \leftarrow \prooforacle{DL_{\alpha}}{}(x, z)$ which
determines whether a pair of elements $x, z \in \G$ has a discrete logarithm
equal to the oracle's secret, i.e., $\alpha = \log_x(z)$. The $DL_{\alpha}$
oracle is a weaker form of the DDH oracle since $\prooforacle{DL_{\alpha}}{}(x,
z) = DDH(h, h^\alpha, x, z)$.

The One-more-Gap-DH assumption states that the adversary has negligible chance
in producing $q+1$ DH pairs $(x_i, x_i^\alpha)$ given $M \gg q$ random challenge
elements $Ch = (c_{1}, .., c_{M}) \in \G^M$ while making at most $q$ queries
to the CDH oracle $\prooforacle{CDH}{}$
\[Pr\left[ \left\{(x_{i}, x_{i}^\alpha) \mid x_i \in Ch\right\}_{i \in
\NatNumUpTo{q+1}} \leftarrow \adv^{\prooforacle{CDH}{}(\cdot),
\prooforacle{DL_{\alpha}}{}(\cdot,\cdot)}(Ch)\right] < \mu\text{.}\]

\begin{lemma}
  \label{lm:neg-abort}
  If the adversary $\adv$ has a non-negligible probability $\epsilon$ in forcing an
  abort in the simulator $\sim$, there exists an adversary $\advb$ which has a
  non-negligible advantage in solving the One-more-Gap-DH problem given black-box
  access to $\adv$.
\end{lemma}

\begin{proof}
  We start with a sketch of the proof. We construct an adversary $\advb$ that
  simulates $\sim$ to adversary $\adv$ to solve the One-more-Gap-DH challenge.
  Simulator $\sim$ has two main functions: computing exponentiations with a
  secret key in $\prooforacle{exp}{\sim}$ and finding the matching input element
  $x = z^{\alpha^{-1}}$ used to query $\prooforacle{H_{kw}}{\sim}$. Adversary
  $\advb$ programs $\prooforacle{H_{G}}{\advb}$ to fix input elements to
  challenge points and uses the CDH oracle $\prooforacle{CDH}{}$ to respond to
  $\prooforacle{exp}{\advb}$ queries. Finally, $\advb$ uses
  $\prooforacle{DL_{\alpha}}{}$ to detect which challenge point matches the
  group element $z$ in the $\prooforacle{H_{kw}}{\advb}$ query $\omega=
  \text{``} d || z\text{''}$. If $\advb$ receives $q+1$ queries corresponding to
  distinct keywords $\{w_i\}_{i \in \NatNumUpTo{q+1}}$ in the oracle
  $\prooforacle{H_{kw}}{\advb}$, then $\advb$ can produce $q+1$ DH pairs from
  the challenge set without querying $\prooforacle{CDH}{}$ more than $q$ times.
  
  Concretely, we build the adversary $\advb$ as follows:
  \begin{description}[topsep=4pt]
    \item[$x \leftarrow \prooforacle{H_{G}}{\advb}(w)$] responds with a new
      challenge element $x \in Ch$ and stores the mapping between each keyword
      and its matching group element.
    \item[$TC \leftarrow \prooforacle{Pub}{\advb}()$] since $\prooforacle{CDH}{}$
      has its own secret $\alpha$, oracle $\prooforacle{Pub}{\advb}$ does not
      choose another secret. The oracle creates $TC$ in the same manner as
      $\prooforacle{Pub}{\sim}$.
    \item[$x^{\alpha} \leftarrow \prooforacle{exp}{\advb}(x)$] uses
      $\prooforacle{CDH}{}(x)$ to respond to up to $q$ queries.
    \item[$\tau \leftarrow \prooforacle{H_{kw}}{\advb}(\omega)$] is similar to
      $\prooforacle{H_{kw}}{\sim}$ and responds to repeated queries
      consistently.  Unlike $\prooforacle{H_{kw}}{\sim}$, this oracle does not
      know the secret $\alpha$ to decrypt the input element. Instead, it uses
      $\prooforacle{DL_\alpha}{}$ to check $z$ against all challenge points $x
      \in Ch$ and find the corresponding element $ z = x^\alpha$ where \(1 =
      \prooforacle{DL_\alpha}{}(x, z) \).
      \begin{enumerate}
        \item Parse the input $\omega$ as ``$d || z$'' where $d \in \Zn$ and $z \in
        \G$. If this fails, respond with a random $l$-bit tag $\tau$.
        \item Find challenge point $x \in Ch$ where $1 =
          \prooforacle{DL_\alpha}{}(x, z)$. If no such point exists, respond
          with a random $l$-bit tag $\tau$.
        \item If $x$ has been queried from the oracle
          $\prooforacle{H_G}{\advb}$, let $w$ be the corresponding preimage.
          Otherwise, respond with a random $l$-bit tag $\tau$.
        \item If $w$ has not been queried, query $w$ from the ideal oracle
          $\prooforacle{\ideal}{}(\mathcal{Y}, \cdot)$ and store the response.
        \item If $\advb$ has queried $q+1$ distinct keywords from the ideal
          oracle, then abort the simulation and solve the One-more-Gap-DH
          challenge.
        \item Respond with a random unused tag $\tau \in TC$ if $d \in
          \prooforacle{\ideal}{}(\mathcal{Y}, w)$. Otherwise, respond with a
          random $l$-bit tag $\tau$.
      \end{enumerate} 
  \end{description}

  In Lemma~\ref{lm:ind-sim-b} (below), we prove that the adversary $\adv$ cannot
  distinguish the simulator $\sim$ from adversary $\advb$. Therefore, if $\adv$
  has a non-negligible chance $\epsilon$ in forcing an abort in $\sim$, then
  with the probability $\epsilon$ adversary $\adv$ queries $q+1$ distinct
  keywords from $\prooforacle{H_{kw}}{\advb}$ while making at most $q$ queries
  from  oracle $\prooforacle{exp}{\advb}$.
  Let \(\{\omega_i\}_{i \in \NatNumUpTo{q+1}}\) be the
  $\prooforacle{H_{kw}}{\advb}$ queries corresponding to the distinct keywords
  \(\{w_i\}_{i \in \NatNumUpTo{q+1}}\). By the construction of
  $\prooforacle{H_{kw}}{\advb}$, we know that \(\displaystyle \left\{\omega_i =
  \text{``}d_i \concat z_i\text{''} \,\middle|\, 1 =
  \prooforacle{DL_\alpha}{}(x_i, z_i) \land x_i =
  \prooforacle{H_G}{\advb}(w_i)\right\}_{i \in \NatNumUpTo{q+1}}\). Since the
  oracle $\prooforacle{H_G}{\advb}$ responds with fresh challenge points, we
  know that the $x_i$s are unique and belong to the challenge set $Ch$.
  Adversary $\advb$ queries the $\prooforacle{CDH}{}$ oracle once per
  $\prooforacle{exp}{\advb}$ query. Since adversary $\adv$ makes less than
  $q+1$ queries from $\prooforacle{exp}{\advb}$, $\advb$ makes at most $q$
  queries from the CDH oracle $\prooforacle{CDH}{}$.
  Adversary $\advb$ produces $q+1$ DH pairs \(\{(x_i, z_i) \mid z_i = x_i^\alpha
  \land x_i \in Ch\}_{i \in \NatNumUpTo{q+1}}\) with at most $q$ queries to the
  CDH oracle $\prooforacle{CDH}{}$ and solves the One-more-Gap challenge with
  probability $\epsilon$.
\end{proof}

\begin{lemma}
  \label{lm:ind-sim-b}
  The adversary $\advb$ is indistinguishable from simulator $\sim$.
\end{lemma}

\begin{proof}
  Oracles $\prooforacle{exp}{\advb}$ and $\prooforacle{Pub}{\advb}$ are
  identical to their $\sim$  counterparts $\prooforacle{exp}{\sim}$ and
  $\prooforacle{Pub}{\sim}$. The oracle $\prooforacle{H_{G}}{\advb}$ responds
  with challenge points which are indistinguishable from the uniformly random
  elements used in $\prooforacle{H_{G}}{\sim}$. Oracles
  $\prooforacle{H_{kw}}{\advb}$ and $\prooforacle{H_{kw}}{\sim}$ only differ in
  how they compute $x = z^{\alpha^{-1}}$ in the step 2. 
  We split the inputs to the oracle $\prooforacle{H_{kw}}{\advb}$ based on their
  inclusion in the challenge set $ch$, and show that oracle
  $\prooforacle{H_{kw}}{\advb}$ is indistinguishable from oracle
  $\prooforacle{H_{kw}}{\sim}$ in both cases. As long as the element $x$ is
  chosen from the challenge set, i.e., $x \in Ch$, the pair $(x, z)$ is unique,
  and both oracles compute the same effective input $x$ because $x =
  z^{\alpha^{-1}}$ is equivalent to $1 = \prooforacle{DL_\alpha}{}(x, z)$. 
  On the other hand, when the element $z$ is generated from an element $x =
  z^{\alpha^{-1}}$ which is not in the challenge set, then oracle
  $\prooforacle{H_{kw}}{\advb}$ cannot compute $x$. Despite the fact that the
  oracle cannot compute $x$, it can determine that $x$ is not from the challenge
  set and consequently not a response from oracle $\prooforacle{H_{G}}{\advb}$
  as $x \notin \text{Range}\left(\prooforacle{H_{G}}{\advb}\right) = Ch$. Both
  oracles $\prooforacle{H_{kw}}{\advb}$ and $\prooforacle{H_{kw}}{\sim}$ respond
  with a random $l$-bit tag when the effective input $x$ is not a response from
  oracles $\prooforacle{H_{G}}{\advb}$ and $\prooforacle{H_{G}}{\sim}$,
  respectively. We conclude that oracles $\prooforacle{H_{kw}}{\advb}$ and
  $\prooforacle{H_{kw}}{\sim}$ are indistinguishable.
\end{proof}
  

\newcommand{\docj}{d_j}
\newcommand{\docx}{d_x}
\newcommand{\qi}{q_i}
\newcommand{\qA}{qa}
\newcommand{\qB}{qb}
\newcommand{\qAi}{\qA_i}
\newcommand{\qBi}{\qB_i}
\newcommand{\qAx}[1]{\qA_{#1}}
\newcommand{\qBx}[1]{\qB_{#1}}
\newcommand{\IM}{S}
\newcommand{\im}[2]{\IM_{#1, #2}}
\newcommand{\imij}{\im{i}{j}}
\newcommand{\Bin}{\textvar{Bin}}
\newcommand{\qnum}{m}
\newcommand{\tagsym}{\textvar{tag}}
\newcommand{\ktg}[1]{t_{#1}}
\newcommand{\Tags}{\textvar{Tags}}
\newcommand{\qtags}[1]{{q_{#1}}.\textvar{Tags}}
\newcommand{\qtagsi}{\qtags{i}}
\newcommand{\bitand}{\;\&\;}

\newcommand{\docs}{\textvar{Docs}}
\newcommand{\allkws}{U}
\newcommand{\docrepr}{R}
\newcommand{\uqlim}{\textvar{ulim}}
\newcommand{\oracle}{\mathcal{O}}
\newcommand{\oracleQuery}{\oracle.\textvar{query}}


\section{The limits of document search} 
\label{appendix:sec:theoretical_search}
We show that even with ideal searches an adversary can recover documents or even
extract the whole corpus. We formalize the extraction problem as follows: an
adversary receives a list of $\n$ keywords $\allkws = \kwlist$ and a search
oracle $\oracle$ which respond to queries using the server's set of
$\serversize$ documents $\docs = \{\doc_1, .., \doc_{\serversize}\}$. The
adversary's goal is recovering the document set $\docs$. Since the adversary is
only interested in the set $\allkws$ of keywords, we ignore any keyword outside
of this set in our analysis.

\subsection{One-bit search extraction} 
\label{appendix:sub:one_bit_search}
In this section, we consider a 1-bit search oracle $\oracle$ which returns a
boolean answer for each query which determine whether at least one matching
document exists. The oracle supports one operation, \textsf{query}, which takes
a set of keywords $\partialset$ as input and returns boolean answer $\binanswer
\leftarrow \oracleQuery(\partialset)$. 

A set of keywords $\docrepr$ represent a document if and only if this set
returns a positive search result $\oracleQuery(\docrepr) = 1$ and adding any
other keyword to this set $\docrepr$ results in a negative response $\forall
x \in \allkws, x \notin \docrepr: \oracleQuery(\docrepr \cup \{x\}) = 0$.
It is easy to see that a document $D = \{d_1, .., d_m\}$ is represented by
$\docrepr = D \cap \allkws$.

Recall from Section~\ref{para:search-privacy} that uniqueness number
$\uniqueNum_{D}$ is the smallest number of keywords that uniquely identify a
document $D$. Moreover, when a document $D_x$ is included in a larger document
$D_y$, i.e. $D_x \subset D_y$, then its uniqueness number $\uniqueNum_{D_x}$ is
$\infty$, and document $D_x$ cannot be detected. We have discussed that such
documents do not have a high impact as they are overshadowed by the larger
document. In this section, we assume that all documents have a finite uniqueness
number and only recover documents with uniqueness number $\uniqueNum_D$ less
than the uniqueness limit $\uqlim$.

\para{Document recovery} 
We assume an adversary who has partial knowledge $\partialset$ about a document
$D$ that is represented by $m$ keywords. If the adversary wants to recover the
rest of document $D$, then she needs to ask at least $t =\n-m$ and at most $\n$
queries from the oracle which leads to a $\Theta(n)$ query complexity. It is
important to note that if there is more than one document that contains
$\partialset$, then recovering \emph{any} of these documents counts as document
recovery.

We claim that the adversary needs to ask at least one query for each keyword
which is not in the document, i.e., that it must make at least $t = \n - m$
queries. We assume to the contrary that the adversary recovers the document with
less than $t$ queries and then show that there are two possibilities for $D$
that the adversary cannot distinguish. Since the number of queries is smaller
than $t$, based on the pigeonhole principle a keyword $x$ exist which has never
been queried without another keyword $y \notin D$ present in the query. We claim
that the adversary cannot distinguish $D$ from the document $D \cup \{x\}$ as
the oracle's responses to all queries will be consistent for both documents. The
queries which do not include $x$ are not impacted by the inclusion of $x$ in the
document, and queries that include $x$ include a keyword $y \notin D$ which
ensures a negative answer for both $D$ and $D \cup \{x\}$. Hence, the adversary
cannot distinguish $D$ from $D \cup \{x\}$ and needs to make at least $t$
queries. Clearly, $\n$ queries suffice; showing the result.

\newcommand{\start}{k}
\newcommand{\docset}{D}
\newcommand{\uqset}{P} 
\newcommand{\uqnum}{\uniqueNum}

Algorithm~\ref{alg:RecoverDocument} recovers a document with $n$ queries.
Without loss of generality, we re-index the keywords to represent the
adversary's known set of keywords as $\partialset = \{\kw{1}, ..,
\kw{\start-1}\}$. The algorithm extends this set with the remaining keywords
$\{\kw{\start}, .., \kw{\n} \}$ as long as the oracle keeps returning 1.
Eventually, the algorithm returns a maximal extension of the initial set
$\partialset$.

\begin{algorithm}[tbp]
  \footnotesize
  \caption{Recover the rest of the document given a keyword set $\{\kw{\start}, .., \kw{\n} \}$.\\
  Start : $\textsc{RecoverDocument}(\partialset)$}
  \label{alg:RecoverDocument}
  \begin{algorithmic}
    \Function{RecoverDocument}{$\partialset$}
    \For {$i \leftarrow \start \ldots \n$}
      \If {$\oracleQuery(\partialset \cup \{\kwi\}) = 1$}
          \State $\partialset \leftarrow \partialset \cup \{\kwi\}$ 
      \EndIf
      \EndFor
    \State \Return \partialset
    \EndFunction
  \end{algorithmic}
\end{algorithm}

\para{Corpus extraction} 
\label{appendix:par:one_bit_search_extraction}
Having a set $\partialset$, extracting \emph{one} plausible document is
straightforward. However, extracting \emph{all} documents that contain
$\partialset$ is more complex. The reason behind this complexity is that when
the adversary adds a keyword $\kwx$ to the set $\partialset$ and receives a
positive query response, she knows a document $D$ exists such that
$\left(\partialset \cup \{\kwx\}\right) \subseteq D$ but cannot determine
whether any document $D'$ exists such that $\partialset \subset D'$ and $\kwx
\notin D'$. Hence, the adversary needs to expand both cases.

We designed a corpus extraction algorithm that takes care of this uncertainty,
see Algorithm~\ref{alg:extract-one-bit}. This recursive algorithm is called with
a set of sets representing the documents $\docset$, the set of keywords
$\partialset$ that the algorithm is considering at this moment, and the index
$\start$ into the list of keywords (the keywords with index less than $\start$
have already been considered). To find all documents with respect to the list of
keywords $\kwset=\{a_1, \ldots, a_{\n}\}$, call $\textsc{Extract}(\emptyset,
\emptyset, 1)$.

The algorithm is recursive. It considers the current set of keywords
$\partialset$ and tries to extend it with a keyword $a_i$ ($\start \leq i <
\n$). If the oracle returns 0, clearly there is no document matching
$\partialset \cup \{a_i\}$. If the oracle returns 1, we cannot distinguish the
two cases above, so we recurse along both paths, one for documents that contain
$a_i$, and the other for documents that do not contain $a_i$. 
When the algorithm finds the $\uqlim$'th keyword in the set $\partialset$, the
algorithm can uniquely identify the document and checks whether this document
has been extracted before (by calling \textsf{IsInDocs}) to prevent duplicates.
After reaching a partial set of at least $\uqlim$ keywords, the algorithm only
traverses the branch which includes $a_i$ as only one document exists which
contains the set $\partialset$ since $|\partialset| \geq \uqlim$. When pursuing
only one branch, the algorithm is similar to the \textsf{RecoverDocument}
function in Algorithm~\ref{alg:RecoverDocument}. 
If the algorithm exhausts all possible keywords without branching, it has found
a document and after checking for duplicates, the algorithm adds $\partialset$
as a new document to the current set of documents $\docset$ and returns.

\begin{algorithm}[tbp]
	\footnotesize
	\caption{Extract non-contained documents with an uniquness number $\uniqueNum_D$
	smaller than $\uqlim$ with a one-bit search oracle based on the keyword set
	$\kwset=\{\kw{1}, .., \kw{\n}\}$.\\
	Start: $\textsc{Extract}(\emptyset, \emptyset, 1)$}
	\label{alg:extract-one-bit}
	\begin{algorithmic}
		\Function{Extract}{$\docset, \partialset,\start$}
		\If {$|\partialset| \geq \uqlim$}
			\Comment The document is uniquely identifiable.
			\If {$\textsc{IsInDocs}(\partialset, \docset) = 1$}
				\Comment $\partialset$ is already extracted.
				\State \Return \docset
			\EndIf
		\EndIf

		\For {$i \leftarrow \start \ldots \n$}
			\If {$\oracleQuery(\partialset \cup \{\kwi\}) = 1$}
				\State $\docset \leftarrow \textsc{Extract}(\docset, \partialset \cup \{\kwi\}, i+1)$ 
				\If {$|\partialset| < \uqlim$}
					\State $\docset \leftarrow \textsc{Extract}(\docset, \partialset, i+1)$ 
				\EndIf
				\State \Return  $\docset$
			\EndIf
		\EndFor

		\If {$\textsc{IsInDocs}(\partialset, \docset) = 0$}
		\Comment No more extension possible.
		\State $\docset \leftarrow \docset \cup \{\partialset\}$
		\EndIf
		\State \Return $\docset$
		\EndFunction
		\\
		\Function{IsInDocs}{$\partialset,\docset$}
		\ForAll {$\doc \in \docset$}
			\If {$\partialset \subseteq \doc$}
				\State \Return 1
			\EndIf
			\EndFor
		\State \Return 0
		\EndFunction
	\end{algorithmic}
\end{algorithm}

We argue that this algorithm finds all documents with uniqueness number
$\uqnum_D < \uqlim$. Clearly, the algorithm explores all sets $\partialset$ of
size less than $\uqlim$ for which there exist matching documents. So,
eventually, the algorithm will find the unique set for each document, which it
will then extend to the corresponding full document.

It is easy to see that the brute-force part, when $|\partialset| < \uqlim$,
requires at most $\order(\n^{\uqlim})$ queries. However, the algorithm does not
expand keyword sets with negative responses, and on average, document sparsity
leads to a significantly lower number of queries. Once $|\partialset| \geq
\uqlim$ the algorithm enters a linear exploration, as it stops branching. It
runs through this linear phase exactly once for each document. Resulting in a
total complexity of $\order(\n^{\uqlim} + \n{}d)$.

\subsection{\numdocsearch search extraction} 
\label{appendix:sub:numdocsearch_search_extraction}
In this section, we consider a \numdocsearch search oracle $\oracle$ which
returns the number of matching documents for each query. The oracle only
supports one operation, \textsf{query}, which takes a set of keywords
$\partialset$ as input and returns the number of matching documents $t
\leftarrow \oracleQuery(\partialset)$.

\para{Document recover}
Since the 1-bit search oracle's output can be computed from the \numdocsearch
oracle, the algorithms from the previous section also work against the
\numdocsearch search oracle. As a matter of fact, when only considering a single
document, the behavior of the \numdocsearch oracle is equivalent to that of the
1-bit search oracle, thus \textsc{RecoverDocument} in
Algorithm~\ref{alg:RecoverDocument} is also optimal for the \numdocsearch system
in recovering documents.

\para{Corpus extraction} 
\label{appendix:par:num_doc_search_extraction}
The extra information provided by the \numdocsearch oracle, however, helps
create a much more efficient corpus extraction function. In particular, an
attacker is no longer faced with the uncertainty caused by the one-bit oracle.
Given an existing set of keywords $\partialset$, the attacker can query
$\partialset \cup \{\kwx\}$ and see if the number of matching documents changes,
or not. If the number of matching documents changes, there were documents that
match $\partialset$ but not $\partialset \cup \{\kwx\}$. If the number of
matching documents stays the same, all documents that match $\partialset$ also
match $\partialset \cup \{\kwx\}$.

\newcommand{\matches}{\textvar{matches}} 
\newcommand{\newmatches}{\textvar{next}}

\begin{algorithm}[tbp]
	\footnotesize
	\caption{Extract all $\#\matches$ documents which include the partial document
	$\partialset$, with a \numdocsearch search oracle based on the keyword set
	$\kwset=\{\kw{1}, .., \kw{\n}\}$.\\
	Start : $\textsc{Extract}(\emptyset, \emptyset, 1, \infty)$}
	\label{alg:extract-docnum}

	\begin{algorithmic}
		\Function{Extract}{$\docset, \partialset, \start, \matches$}
		\For {$i \leftarrow \start,\n$}
			\State $\newmatches = \oracleQuery(\partialset \cup \{\kwi\})$
			\If {$ \newmatches > 0$}
				\State $\docset \leftarrow \textsc{Extract}(\docset, \partialset \cup \{\kwi\}, i+1, \newmatches)$ 
				\If {$\matches > \newmatches$}
					\Comment At least one doc did not contain $\kwi$
					\State $\docset \leftarrow \textsc{Extract}(\docset, \partialset, i+1, \matches-\newmatches)$ 
				\EndIf
				\State \Return $\docset$ \vspace{0.4em}
			\EndIf 
		\EndFor 
		\Return $\docset \cup \{\partialset\}$
		\EndFunction
	\end{algorithmic}

\end{algorithm}

Algorithm~\ref{alg:extract-docnum} exploits this principle. It keeps track of
the current set of documents represented as $\docset$, the set of keywords
$\partialset$ that it is currently considering, the index $\start$ into the list
of keywords (the keywords with index less than $\start$ have already been
considered), and the number \textsf{matches} of documents that contain the
current set of keywords $\partialset$. To find all documents with respect to the
set of keywords $\kwset=\{a_1, \ldots, a_{\n}\}$, call
$\textsc{Extract}(\emptyset, \emptyset, 1, \infty)$.

Given the current set $\partialset$ with \textsf{matches} matching documents it
proceeds as follows. It asks the next keyword $a_i$, if there are still matching
documents (i.e., $\textsf{next} > 0$ ) it adds $a_i$ to $\partialset$ and
continues exploring. If some documents matched $\partialset$ but did not match
$\partialset \cup a_i$ (i.e., $\textsf{matches} > \textsf{next}$), the algorithm
also continues exploring by skipping the keyword $a_i$.

In the beginning, the algorithm starts with an empty set and checks every
keyword. This requires $\n$ queries, and the algorithm continues with a
deterministic document recovery for $d$ documents. Therefore, this algorithm
requires a total of $\order(\n{}d)$ queries for extracting the corpus.


\end{document}